\newenvironment{proof}%
        {\begin{list}{ {\bf Proof}}%
                           {\setlength{\leftmargin}{\labelwidth}} \item }%
        {\end{list}}
\newtheorem{definition}{Definition}
\newtheorem{lem}[definition]{Lemma}
\newtheorem{prop}[definition]{Proposition}
\newtheorem{conj}[definition]{Conjecture}
\begin{document}

\begin{frontmatter}

\title{\sc Generating Candidate Busy Beaver Machines \\ (Or How to Build the Zany Zoo) }
\author{James Harland \\ 
School of Science \\ 
RMIT University \\
GPO Box 2476 \\ 
Melbourne, 3001 \\ 
Australia \\ 
{\em james.harland@rmit.edu.au} }

\begin{abstract}
The busy beaver problem is a well-known example of a non-computable function. In order to determine a particular value of this function, it is necessary to generate and classify a large number of Turing machines. Previous work on this problem has described the processes used for the generation and classification of these machines, but unfortunately has generally not provided details of the machines considered. While there is no reason to doubt the veracity of the results known so far, it is difficult to accept such results as scientifically proven without being able to inspect the appropriate evidence. In addition, a list of machines and their classifications can be used for other results, such as variations on the busy beaver problem and related problems such as the placid platypus problem. In this paper we investigate how to generate classes of machines to be considered for the busy beaver problem. We discuss the relationship between quadruple and quintuple variants of Turing machines, and show that the latter are more general than the former. We give some formal results to justify our strategy for minimising the number of machines generated, and define a process reflecting this strategy for generating machines. We describe our implementation, and the results of generating various classes of machines with up to 5 states or up to 5 symbols, all of which (together with our code) are available on the author's website. 
\end{abstract}

\begin{keyword}
Busy beaver \sep Turing machines \sep normal forms
\end{keyword}

\end{frontmatter}

\section{Introduction}

Tibor Rado introduced the \textit{busy beaver problem} in 1962 as an example of a non-computable function \cite{Rado62}. It has a surprisingly simple definition: to find the largest number of non-blank characters that is output by a terminating Turing machine, commencing on the blank tape. The machine must be of no more than a given size, so that one can define a function mapping the size of the machine to the maximum size of the output produced. The size of the output can be surprisingly large; for example, there is a machine with six states that terminates on the blank input after $10^{36,534}$ steps and prints out $10^{18,267}$ 1's \cite{Marxen,Harland16}. 

In principle, this problem could be solved by cleverly composing a particular machine of a given size and proving that it is maximal. In practice, solving the problem means analysing all machines of a given size and determining the maximal one (or ones). Ideally the evidence for the maximality of such machines would also be available, in order to allow the result to be checked or reproduced. However, the known results for the busy beaver problem generally do not provide sufficient data or evidence for this to be done. For example, the work of Lafitte and Papazian \cite{LP07} is probably the most comprehensive analysis of the busy beaver problem to date. This includes enumeration of all 3-state 2-symbol and 2-state 3-symbol machines, and some less detailed analyses of larger cases, such as those for 4-state 2-symbol, 2-state 4-symbol and 3-state 3-symbol machines. Unfortunately though, there appears to be no code available, nor data files, and whilst they provide a description of the processes used, it is difficult to accept the results provided as complete proof of the properties claimed. 

A similar property applies to older results for the busy beaver as well. Lin and Rao \cite{LR65} provide the earliest systematic analyses of the 3-state 2-symbol case, which involved using a program which was able to analyse all but 40 machines, which were then analysed by hand. They provide a description of their method and a specification of the 40 machines, but the details provided are not sufficient to reproduce exactly what was done, and the code used does not seem to be available. Brady \cite{Brady83} and Machlin and Stout \cite{MS90} provide analyses of the 4-state 2-symbol machines, which also use programs to reduce the unknown cases to 218 and 210 respectively. Both times, these remaining machines were determined not to terminate due to a human analysis, but for which there is no direct evidence, or even a specification of exactly which machines were involved, and the code used in these analyses also does not seem to be available. 

We have no reason to doubt the correctness of these results, or that they were obtained in good faith, using the best techniques possible at the time. However, we believe that it is not scientifically proper to accept these results as proven, given that there is neither a mathematical proof of their correctness nor sufficient empirical evidence that would allow their claims to be inspected, assessed and checked. It should also be said that the computing resources available now dwarf those of even the recent past, and that many tasks which are now routine were considered unthinkable even ten years ago, let alone fifty. By the same token, those same computing resources which are now available in this era of cloud computing mean that any claims about busy beaver results should be required to provide a substantially higher level of ``computational evidence'' than those described above. 
\textbf{This evidence should include not only the programs used for the search, but also the list of machines generated as well as the evidence used to draw conclusions about their status.} In other words, in an era when the Flyspeck project has provided a (very large) computer-assisted proof of a long-standing conjecture of Kepler's \cite{Hales05,Flyspeck}, it seems particularly important for the evidence for all claims about the busy beaver and related problems to be made on the basis of verifiable and reproducable evidence, including all relevant code and data. A similar conclusion has been reached by de Mol, who has argued that computer-assisted proofs such as these should include a \textit{description of the computational process}, its \textit{output} and the \textit{code used} \cite{deMol}.

As we have argued previously \cite{Harland16}, this means that there is a need to re-examine our knowledge of the busy beaver problem, and to provide the appropriate level of evidence for the results derived (as well as to determine new results in a similar manner, of course). This will involve providing sufficient detail 
(including code and data) so that the results can be verified, or reproduced if desired, by an interested researcher. 
In this paper we address the problem of how to generate the classes of machines needed in order to determine the busy beaver function, corresponding to steps 1 and 2 of the framework in \cite{Harland16}. In principle, we can do this independently of the execution of machines. Generating the machines independently from the execution and analysis of them not only seems intuitively natural but also allows for different analyses of the same class of machines, possibly by different researchers. This ability to reproduce and hence confirm or refute empirical results seems fundamental to the busy beaver problem. 

\begin{figure}[tb]
\centering
\includegraphics[width=0.7\textwidth]{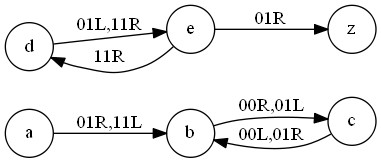}
\caption{Disconnected machine}
\label{fig:example1}
\end{figure}

In practice, the separation between these two processes is not so neat and simple. 
Firstly, the number of machines of a given size grows very quickly; in fact, there are $O(n^n)$ machines of size $n$. Hence any saving that can be made by reducing the number of machines to be processed is vitally important. 
Secondly, the process for generating machines needs to be carefully considered and based on sound principles, rather than simply generating all possible machines of a given size. 
For example, consider the 5-state 2-symbol machine in Figure~\ref{fig:example1}. As there is no connection between states $a,b,c$ and $d,e$, this is effectively a machine of size 3, rather than size 5. Such redundancies clearly need to be avoided if possible. 
Thirdly, in order to determine the busy beaver function, we need only analyse the execution of these machines on a single input (i.e.\ when the tape is entirely blank). This means that we can potentially exploit this property to reduce the number of machines that need to be considered. 

The classic way to address these issues is to intermingle the generation of the machines with their execution by use of a technique known as \textit{tree normal form (tnf)}\cite{Rado62,LR65}. The idea is to execute a partially defined machine on the blank input (with some suitable initial set of transitions) until an unallocated transition is found, at which point an appropriate additional transition is allocated, and various checks are performed on the resulting machine. If the resulting machine fails one of these checks, an alternative allocation of an additional transition is sought. Otherwise, execution then proceeds with the extended machine. This process continues until the machine is sufficiently defined and a halting transition is added to the machine, at which point the machine generation is complete. In order to find all such machines, it is straightforward to perform a backtracking search over all choices of machine that can be made. This process of lazy generation of machines significantly reduces the number of machines that need to be considered (see Section~\ref{sec:results}). 

The \textit{tnf} approach makes it attractive to use a single process to generate and analyse machines rather than separating the two as discussed above, and this is the method that has generally been used in the past \cite{LR65,Brady83,MS90,LP07}. However, this has the disadvantage of not allowing multiple analyses of the same data. This is important not just for verification of completed results, but also for developing the methods to derive such results. In other words, it is often useful to be able to evaluate the effectiveness of a given analysis method by testing it on a large number of examples, identifying those machines for which the analysis is problematic or incomplete, and using these examples to refine and improve the analysis method. A ``one-shot'' approach will suffice if a complete analysis procedure, which is known in advance to be capable of analysing all machines of interest, is available. This is generally unrealistic in practice, and so it seems necessary to have an incremental development cycle involving testing a partially developed technique on a class of machines, and using the results to further improve the effectiveness of the technique. This makes it appropriate to separate the generation and analysis of the machines of interest as best we can. 

Our intention is to provide a data set consisting of all machines which need to be considered for the busy beaver problem of a given size. Clearly it is important to minimise the size of this set where possible. However, it is even more important to ensure that all relevant machines are present, or equivalently that any omissions from this set are certainly irrelevant. This means that we cannot generally reduce the number of irrelevant machines retained for analysis to 0, as we will only eliminate a machine from consideration if we are sure it is irrelevant. For this reason we investigate formal results which establish the correctness of constraints on the type of machines that need to be generated in order to solve the busy beaver and related problems. We define what is meant by being relevant to the busy beaver problem, and give results which establish the soundness of eliminating certain classes of machines from consideration. We provide a procedure for generating machines which respects these constraints, and report our results obtained from an implementation of it. The code used and the results obtained are available from the author's website.\footnote{\url{www.cs.rmit.edu.au/~jah/busybeaver}}

This paper is organised as follows. In Section~\ref{sec:related} we discuss related work, and in Section~\ref{sec:definitions} we discuss some preliminaries. 
In Section~\ref{sec:qq} we investigate the relationship between quadruple and quintuple variants of Turing machines, and in
Section~\ref{sec:normal} we show how to obtain a normal form for the machines to be generated.
In Section~\ref{sec:monotonicity} we discuss some issues that arise from the process of generating machines, and in Section~\ref{sec:generation} we specify a procedure for generating relevant machines.  
In Section~\ref{sec:results} we present the results of our generation of various classes of machines, and
in Section~\ref{sec:conclusion} we present our conclusions and discuss possibilities for further work. 

\section{Related Work}\label{sec:related} 

The busy beaver function is defined as the maximum number of non-blank characters that is printed by a terminating $n$-state Turing machine on the blank input. 
This function is often denoted as $\Sigma(n)$; in this paper we will use the more intuitive notation of $bb(n)$ (as in \cite{Harland16}). 
The number of non-blank characters printed by a terminating machine is known as its {\em productivity} \cite{BBJ}. The number of steps taken by the machine to terminate on the blank input is known as its \textit{activity} \cite{Harland16}. A non-terminating machine has activity $\infty$. 
In the literature, the maximum activity for machines of size $n$ is often denoted as ${\cal S}(n)$; in this paper, we denote this function as $\mathit{ff}(n)$.\footnote{We call this function the {\em frantic frog.}} As we will be considering machines with both a varying number of states and a varying number of symbols, we will generally use the notation $bb(n,m)$ and $\mathit{ff}(n,m)$ for a machine with $n$ states and $m$ symbols. 

Rado's motivation for introducing the busy beaver function was to have a specific example of a non-computable function. That such functions exist had been known for some time \cite{Church36,Turing36}, but Rado was interested to find a particularly simple definition of a specific function in this class. He was able to show that the busy beaver function is non-computable by showing that it grows faster than any computable function. Not long afterwards, Lin and Rado \cite{LR65} were the first to produce some specific values for the busy beaver itself, for the cases in which the machine had 1, 2 or 3 states (and exactly 2 symbols). Later, Brady \cite{Brady83} and independently Machlin and Stout \cite{MS90} confirmed these results, and extended them to include the case for 4 states. This also included a more sophisticated analysis of non-terminating machines. Lafitte and Papazian \cite{LP07} provided the most comprehensive examination of these classes of machines to date, which also included the only known systematic analysis thus far of machines with 3 states and 3 symbols. They have also provided some intuitive explanation of why the 2-state 4-symbol class is the potentially the most difficult to analyse of these classes, as well as some empirical results about the apparent square law relationship between activity and productivity for the most productive machines. 

The machines with 5 or 6 states have also received some attention, although generally in less detail. Marxen and Buntrock  \cite{HM90} provided the first comprehensive analysis of the 5-state case, and to a lesser extend, of the 6-state one as well. This work sparked off an informal competition to find machines of very large productivity, which continues to this day, and is very well documented in great detail on Marxen's website \cite{Marxen}. The machines of greatest activity and productivity given in \cite{HM90} have been superceded over the years by a series of increasingly productive (and active) machines, found by various contributors in unpublished work. There is also some analysis of this class of machines in the work of Michel \cite{Michel93,Michel}. 
There seems to be an informal consensus that the maximum productivity for 5-state 2-symbol machines is 4,098 (although there are two machines with this productivity, with activity 47,176,870 and 11,798,826), but for other classes there are machines with productivity and activity much larger than this. 
Kellett \cite{Kellett}, building on the earlier work of Ross \cite{Ross}, has provided an analysis of the 5-state machines, using so-called ``quadruple'' machines, (i.e.\ each transition is a quadruple rather than a quintuple, reflecting the fact that each transition can either change the symbol on the tape or move the tape head, but not both). As with seemingly all previous studies in this area, the classification is not fully automated; in particular, it is reported that the 98 unclassified machines were analysed by hand, rather than by an automated procedure. Kellett also reports some results for 6-state machines, but without any analysis of the non-terminating machines. 

It is well-known that the quadruple and quintuple variants of Turing machines are equivalent from a computability perspective (i.e.\ for any machine in one class there is an equivalent machine in the other), it is not obvious that this equivalence is maintained when the number of states is restricted; in particular, it is not obvious that for any \textit{5-state quintuple} machine that there is an equivalent \textit{5-state quadruple} machine. This means that it is appropriate, if perhaps a little conservative, to consider the quadruple and quintuple cases as separate problems. There are some quadruple machines of notable size on Marxen's web page, but it is also worth noting that the machines of largest known productivity are all quintuple machines. We discuss this issue in more detail in Section~\ref{sec:qq}, where we will show that the quintuple problem subsumes the quadruple one. 

\section{Preliminaries and Definitions}\label{sec:definitions} 

We use the following definition of a Turing machine, which is essentially that of Sudkamp \cite{Sudkamp} with some minor variations as discussed in \cite{Harland16}. Here we assume the tape alphabet and the input alphabet are the same. 

\begin{definition} \label{def:tm}
A Turing machine is a quadruple $(Q \cup \{z\}, \Gamma, \delta, a)$ where

\begin{itemize}
\item $z$ is a distinguished state called a {\em halting state}
\item $\Gamma$ is the tape alphabet
\item $\delta$ is a partial function from $Q \times \Gamma$ to $Q \cup \{z\} \times \Gamma \times \{l,r\}$ called the {\em transition function}
\item $a \in Q$ is a distinguished state called the {\em start state}
\end{itemize} 
\end{definition}

Note that due to the way that $\delta$ is defined, this is the so-called \textit{quintuple transition variation} of
Turing machines, in that a transition must specify for a given input state and input character, a new state, an output character and a
direction for the tape in which to move. Hence a transition can be specified by a quintuple of the form

\begin{center}
$(State, Input, Output, Direction, NewState)$
\end{center}
where $State \in Q$, $NewState \in Q \cup \{z\}$, $Input, Output \in \Gamma$ and $Direction \in \{l,r\}$. 

We call a transition a \textit{halting transition} if $NewState = z$; otherwise it is a \textit{standard transition}. 

The quadruple machines mentioned above allow only one of the $Output$ or $Direction$ to be specified in a transition, i.e.\ that each transition must either write a new character on the tape or 
move, and not both; for such machines, clearly only a tuple of 4 elements is required, which is why we refer to this case as the \textit{quadruple transition variation}. 

Given some notational convention for identifying the start state and halting state, a Turing machine can be characterised by the tuples which make up the definition of $\delta$. In this paper the start state is denoted $a$ and the halt state is denoted $z$. We also denote the blank symbol as $0$. 
We will also use $\_$ to indicate a variable whose values may be arbitrary (a la Prolog \cite{swi}), so that for example a tuple in which \textit{Input} is 1, \textit{Output} is 0 and all other values are arbitrary would be denoted $(\_, 1, 0, \_, \_)$. 

We denote by an $n$-state Turing machine one in which $|Q| = n$. In other words, an $n$-state Turing machine has $n$ standard states and a halting state.  
Note also that there are no transitions from state $z$, and that as $\delta$ is a partial function, there is at most one transition for a given
pair of a state and a character in the tape alphabet. This means that our Turing machines are all \textit{deterministic}; there is never a case when a machine has to ``choose'' between two possible transitions for a given state and input symbol. Note that it is possible for there to be no transition for a given state and symbol combination. If such a combination is encountered during execution, the machine halts.  We generally prefer to have explicit halting transitions rather than have the machine halt in this way. 

A \textit{configuration} of a Turing machine is the current state of execution, containing the current tape contents, the current
machine state and the position of the tape head \cite{Sudkamp}. We will use $111\{b\}011$ to denote a configuration in which the Turing machine is
in state $b$ with the string 111011 on the tape and the tape head pointing at the 0. In this paper, we do not formally define how an initial configuration and a (deterministic) Turing machine can be used to specify a possibly infinite sequence of configurations, as this is well-known and is not of central importance. We will refer to this sequence of configurations as a \textit{computation}, or the \textit{execution} of the machine. 

Machine states are labelled $a,b,c,d,e \ldots$ where $a$ is the initial state of the machine. The halting state is labelled $z$. Symbols are labelled $0,1,2,3,\ldots $ where $0$ is the blank symbol. 

We will find the following notion of \textit{dimension} useful.  

\begin{definition}
Let $M$ be a Turing machine with $n$ states and $m$ symbols (where $n,m \geq 2$). Then we say $M$ has \textit{dimension} $n \times m$. 
\end{definition}

The busy beaver problem is based on terminating machines of a given size. Hence a vital aspect of generating the appropriate machines to consider is the how the halting transitions are introduced to the machine, and the point in the process when this is done. This is the motivation for the definition below. 

\begin{definition}

We say a Turing machine $M$ is

\begin{itemize}
	\item \textbf{k-halting} if there are $k$ transitions of the form $(\_,\_,\_,\_,z)$ in $M$. 
	\item \textbf{exhaustive} if $\delta$ is a \textbf{total} function from $Q \times \Gamma$ to $Q \cup \{z\}
  \times \Gamma \times \{l,r\}$, i.e.\ that \\
  $\forall q \in Q \; \forall \gamma \in \Gamma, \; \exists q' \in Q \cup \{z\}, \gamma' \in \Gamma$ and $D \in \{l,r\}$ such that $\delta(q,\gamma) = \langle q',\gamma',D \rangle$. 
	\item \textbf{n-state full}  if $|Q| = n$.
	\item \textbf{m-symbol full} if $|\Gamma| = m$.
\end{itemize}
  
\end{definition}

Note that machines which are exhaustive but not 1-halting are either guaranteed not to terminate (as there is no transition into the halting state and every combination of state and input symbol has a transition defined for it), or have multiple halting transitions, of which at most only one can ever be used in a given computation, making the other halting transitions spurious. It is interesting to note that the Wolfram prize for finding minimal universal Turing machines \cite{wolframprize} used machines which are exhaustive and 0-halting, and hence these machines can never terminate\footnote{Termination is simulated by repeatedly reproducing the final simulated configuration.}. In our case, we will seek to generate $k$-halting machines for some $k > 0$, and preferably with $k = 1$. 

In this paper we will assume that an $n$-state $m$-symbol Turing machine has $n \geq 2$ and $m \geq 2$. 
When discussing exhaustive machines, we will sometimes find it useful to explicitly refer to $n$ and $m$, so we call a Turing machine $n$-$m$-exhaustive when it is exhaustive and we have $|Q| = n$ and $|\Gamma| = m$.

Note that as we will be generating machines incrementally, we may find that we have a machine containing, say, 3 states and 2 symbols during the process of generating a 5-state 2-symbol machine. In this case, the generated machine is 3-state full (and 2-symbol full) but not 4-state full or 5-state full, and as a 3-state 2-symbol machine can have at most 6 transitions, it is not 5-2-exhaustive. 

As we are interested in finding the maximum number of non-blank characters that can be printed by a terminating machine, we will use the concept of \textit{maximising} machines, defined below. 

\begin{definition}
Let $M$ be a Turing machine. 

\begin{itemize}
	\item $M$ is a \textbf{maximising} machine if for every tuple of the form $(\_,\_,O,\_,z)$ in $M$, $O \not = 0$. 
	\item $M$ is a \textbf{minimising} machine if for every tuple of the form $(\_,\_,O,\_,z)$ in $M$, $O$ is 0. 
\end{itemize}
\end{definition}

The final transition that is executed in a terminating computation in a maximising machine, i.e.\ the halting transition, can be guaranteed not to reduce the number of non-blank symbols on the tape. This seems an entirely desirable property in a search for busy beavers. Note that when generating machines, we can satisfy the requirement for the machine to be maximising by ensuring that $O = 1$ in any halting transition. This is a convenient choice, as we can be sure that $1$ will occur in every machine we consider, but not necessarily any other non-blank symbol. 
In some circumstances it can be useful to consider minimising machines, but this is comparatively rare. Note that if there is more than one halting transition, it is possible for the machine to be neither maximising nor minimising. 

When searching for busy beaver machines, it seems natural to focus on machines which are \textit{1-halting, exhaustive} and \textit{maximising}. As there is only one input on which the machine is evaluated (the blank input), there is only ever one halting transition that can be executed. This means that a $k$-halting machine where $k > 1$ is potentially wasteful. in that such transitions may be better used to define a larger and more complex machine. For similar reasons, it seems natural to focus on exhaustive machines, as non-exhaustive ones are also potentially wasteful. As discussed above, it also seems natural to insist on maximising machines. It is worth noting that all known ``monster'' busy beaver machines are 1-halting, exhaustive and maximising \cite{Marxen,Harland16}, although to the best of the author's knowledge, this is merely an empirical observation rather than a proven result. We will use the term \textit{dreadful dragons} to refer to Turing machines of very high productivity and activity. We will also refer to the $n$th entry in the table of 100 dreadful dragons in \cite{Harland16} as Dragon $n$. 

Note that in a 1-halting, exhaustive and maximising machine, we may assume that the (unique) halting transition is always of the form $(S,I,1,D,z)$, and we may arbitrarily assign $D$ as $r$. This means that the halting transition can be completely specified once $S$ and $I$ are known. Note that all 1-halting, exhaustive and maximising machines will have exactly $n \times m$ transitions, one for each combination of state and tape symbol, and with $(n \times m) - 1$ standard transitions, and one halting transition. 

It should be noted that if we consider all possible machines, rather than those generated by the \textit{tnf} process, there are as many $n$-state $m$-symbol machines as there are $m$-state $n$-symbol machines. When determining values for each of \textit{State, Input, Output, Direction} and \textit{NewState} in the tuple \textit{(State, Input, Output, Direction, NewState)}, for an $n$-state $m$-symbol machine there are $n$ choices for each of \textit{State} and \textit{NewState}, and $m$ choices for each of \textit{Input} and \textit{Output} whereas for an 
$m$-state $n$-symbol machine there are $m$ choices for each of \textit{State} and \textit{NewState}, and $n$ choices for each of \textit{Input} and \textit{Output}. This means that if we generate the machines naively, we will find that there are the same number of machines with say 5 states and 2 symbols as there are with 2 states and 5 symbols (although the use of the \textit{tnf} process disrupts this neat symmetry). This is why we will speak of generating machines of a given dimension, rather than a given number of states. 

It should also be noted that we have a rather unusual property for Turing machines, which is that we are only interested in their execution on the blank input. This means that we can make some significant reductions on the machines that need to be considered (see Section~\ref{sec:normal}). 

Not all machines will be relevant to the busy beaver problem, and in fact there are some terminating machines which will be of little interest. This leads us to the definition below. 

\begin{definition}
Let $M$ be a Turing machine. We say $M$ has {\bf activity} $k$ if the execution of $M$ on the blank input terminates in $k$ steps. If the execution of $M$ on the blank input does not terminate, we say the activity of $M$ is $\infty$. The {\bf productivity} of a machine $M$ whose activity is finite is the number of non-blank characters on the tape after $M$ has halted. If the activity of $M$ is $\infty$, then the productivity of $M$ is not defined. 

We denote the activity and productivity of a Turing machine $M$ as {\em activity(M)} and {\em productivity(M)} respectively. 

We say two machines $M_1$ and $M_2$ are {\bf activity equivalent} if $activity(M_1)$ and $activity(M_2)$ are both $\infty$, or if $activity(M_1)$ and $activity(M_2)$ are both finite and $activity(M_1) = activity(M_2)$. 

We say two machines $M_1$ and $M_2$ are {\bf productivity equivalent} if $productivity(M_1)$ and $productivity(M_2)$ are both undefined, or $productivity(M_1)$ and $productivity(M_2)$ are both defined and $productivity(M_1) = productivity(M_2)$. 
\end{definition}

Note that the productivity of a Turing machine is only defined for machines which terminate on the blank input. This is slightly different to the productivity defined in \cite{BBJ} (page 42), in which the productivity is defined for contiguous non-blanks, and the productivity of a non-terminating machine is $0$. We believe it is more intuitive for the productivity only to be defined for machines which terminate on the blank input, and that it is appropriate to define productivity to include discontiguous non-blanks (i.e.\ that the non-blank symbols can be on any pattern on the tape). This is consistent with the machines of very high productivity reported on Marxen's web page \cite{Marxen}. 

We are now in a position to define restrictions on the machines generated which will eliminate some uninteresting cases. 

\begin{definition}\label{def:blank}
We say a Turing machine $M$ {\bf satisfies the blank tape condition} iff during the computation of $M$ on the blank input, the only configuration in which the tape is blank is the initial configuration. 
\end{definition}

\begin{definition}\label{def:relevant}
We say an $n$-state Turing machine $M$ where $n \geq 2$ is {\bf irrelevant to the busy beaver problem} if at least one of the following conditions is satisfied.
\begin{itemize}
	\item $activity(M)$ is $\infty$
	\item $activity(M) \leq n$
	\item $productivity(M) = 0$
	\item $M$ does not satisfy the blank tape condition
\end{itemize}
Otherwise it is {\bf relevant to the busy beaver problem.}
\end{definition}

This means that in order to be relevant to the busy beaver problem, a Turing machine $M$ must be such that $activity(M)$ is finite and $> n$, $productivity(M) > 0$ and $M$ must satisfy the blank tape condition.
In order to solve the busy beaver problem, clearly machines of productivity 0 are of little interest. Similarly machines of activity 1 or 2, which terminate after 1 or 2 steps of computation, are also of little interest. It is also simple to construct an $n$-state machine of activity $n$ and productivity $n$ as follows. We label the states as $a_1 = a, a_2, \ldots, a_n$ for convenience. 

\begin{center}
\begin{tabular}{ccccc}
\textbf{State} & \textbf{Input} & \textbf{Output} & \textbf{Direction} & \textbf{New State} \\
$a_1$ & 0 & 1 & r & $a_2$ \\
$a_2$ & 0 & 1 & r & $a_3$ \\
\ldots & & & & \\
$a_n$ & 0 & 1 & r & $z$ \\
\end{tabular}
\end{center}

Note that this is actually a template for many machines; as the machine will only be executed on the blank input, all transitions other than those for which \textbf{Input} is 0 will not be executed. Any machine satisfying this template terminates in $n$ steps with $n$ 1's on the tape, as follows. The notation $C_1 \Rightarrow C_2$ is used to indicate that configuration $C_2$ is the result of one execution step in configuration $C_1$. 

\begin{center}
$\{a_1\}0 \Rightarrow 1\{a_2\}0 \Rightarrow 11\{a_3\}0 \Rightarrow \ldots \Rightarrow 1^{n-1} \{a_n\}0 \Rightarrow 1^n 0 \{z\}$
\end{center}

\begin{figure}
\centering
\includegraphics[width=0.8\textwidth]{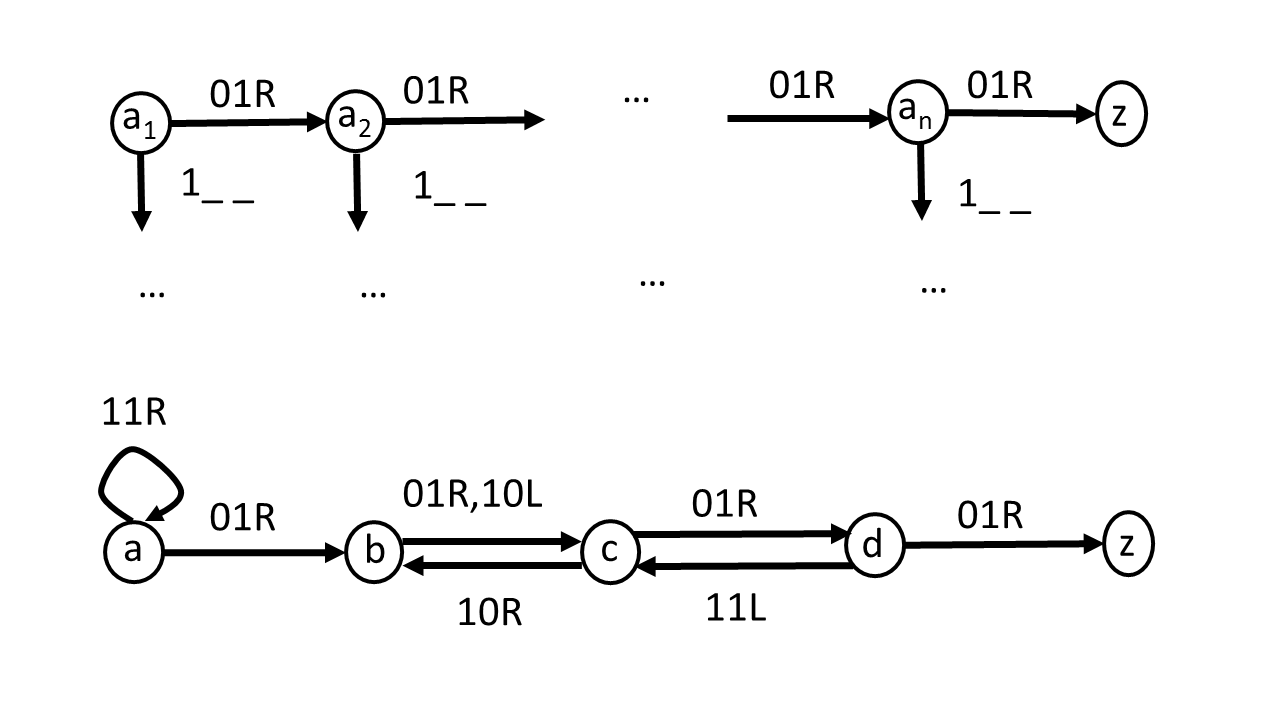}
\caption{Gutless goanna template and example machine}
\label{fig:goanna}
\end{figure}

A diagrammatic representation of this machine template and an example of a machine which conforms to this template are given in Figure~\ref{fig:goanna}. The similarity between this representation and a goanna\footnote{A goanna is an Australian lizard.} is the reason for the name \textit{gutless goanna} for this kind of machine.

This means that we should concentrate our search on $n$-state machines of activity $> n$. It may also seem natural to insist that relevant machines have productivity $> n$ rather than just $> 0$. However, it is generally much harder to guarantee a minimum level of productivity than it is to guarantee a minimum level of activity. 

The most unusual aspect of Definition~\ref{def:relevant} is the blank tape condition. The reason that this condition is included is 
to minimise the number of machines that need to be considered. For example, consider a machine in which the tape remains blank until the third step of execution (i.e.\ the first two such steps do not alter the tape, which is initially blank). As we shall see in Section~\ref{sec:normal}, it is straightforward to transform this machine into one which does not have this property, and yet has the same productivity as the original. The new machine will, though, have a lower activity than the original, but the new machine does not ``waste'' the first two steps of execution. This may be thought of as only considering the part of the execution of the machine that involves a non-blank tape (apart from at the very beginning) as appropriate for consideration for the busy beaver problem. Taking this approach a little further, we can also perform a similar transformation on a machine whose execution returns the configuration to one containing a blank tape at some point after the initial configuration. Again, the transformed machine will have the same productivity but a lower activity, and may be thought of as only considering the part of the computation that involves a non-blank tape. Details of these transformations and proofs that they preserve productivity are given in Section~\ref{sec:normal}. 

\begin{figure}
\centering
\includegraphics[width=0.7\textwidth]{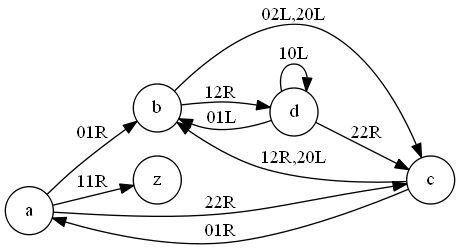}
\caption{Dragon 91}
\label{fig:m91raw}
\end{figure}

The important consequence of requiring the blank tape condition is that it will reduce the number of machines that are considered. This introduces some minor variation into the precise definition of some aspects of the problem. As mentioned above, the transformations do not change productivity, and so the value of $bb$ will not be altered. However, our approach may result in some slightly lower values of $\mathit{ff}$, and in general some slightly lower values for activity than may be found elsewhere, such as some of the activity values for the known dreadful dragons  reported by Marxen \cite{Marxen}. For example, consider the machine in Figure~\ref{fig:m91raw}, which is Dragon~91. This machine returns the tape to blank after 5 steps, and then proceeds for around $10^{14,000}$ steps before halting. The initial steps of the execution of this machine are below. 

\begin{center}
$\{a\}0 \Rightarrow 1\{b\}0 \Rightarrow \{c\}12 \Rightarrow 2\{b\}2 \Rightarrow \{c\}20 \Rightarrow \{b\}0 \Rightarrow \{c\}02 \Rightarrow 1\{a\}2 \Rightarrow \ldots$
\end{center}

This computation can be considered as ``restarting'' in configuration $\{b\}0$ after 5 steps of execution from $\{a\}0$. 
Clearly transforming the machine to avoid this will reduce the number of steps executed by 5, but does not change the productivity of the machine (which is around $10^{7,000}$). 

While it is certainly true that maximising computational properties (and hence maximising the value of $\mathit{ff}$ by whatever means) is very much in the spirit of the busy beaver problem, it is also a fundamental aspect of the problem to define precisely the class of Turing machines of interest, and hence precisely define the problem. As with many issues, there is no obvious ``right'' way to choose such a definition;  in our case, we prefer one which when compared to the machines on Marxen's website 
both preserves productivity (and hence $bb$ values) and reduces the number of machines to be considered over one that preserves both productivity and activity at the cost of an increased number of machines to be analysed. 

\section{Quintuple versus Quadruple}
\label{sec:qq}

A further issue that arises in discussion about the precise definition of Turing machine that is to be used is whether to the quintuple or quadruple variant. As noted above, some previous analyses have used the quadruple variant of Turing machines. An attraction of this approach is that there are generally less machines to consider (as there are only 4 elements to consider for each transition rather than 5). However, the precise relationship between the two variants with respect to the busy beaver problem is not obvious, especially as all the known dreadful dragons are quintuple machines. In this section we explore the relationship between the two types of Turing machine, and show that searching quintuple machines will find a superset of those found by searching quadruple machines. 

Ross \cite{Ross} and Kellett \cite{Kellett} have given an analysis of 5-state 2-symbols machines based on the quadruple variant of Turing machines. 
It is well-known that the quadruple and quintuple variants of Turing machines are equivalent, i.e.\ that for every quadruple machine there is an equivalent quintuple machine, and vice-versa \cite{Ross}. However, this result is not strong enough for our purposes, as it is not clear that this equivalence is maintained if we impose the extra restriction of requiring the transformation to preserve the number of states and symbols in the Turing machine. 
In particular, it is not clear that for a given $n$-state $m$-symbol quintuple machine that there is an equivalent \textit{$n$-state $m$-symbol} quadruple machine. It is also worth noting that whilst there are some quadruple machines of notable size \cite{Marxen}, the largest known machines in all classes are all defined as quintuple machines.

With a little care, it is not overly difficult to show that for an $n$-state $m$-symbol quadruple machine that there is an equivalent $n$-state $m$-symbol quintuple machine. We do this below. 

\begin{definition} \label{def:tm4}
A {\bf quadruple} Turing machine is a quadruple $(Q \cup \{z\}, \Gamma, \delta, a)$ where

\begin{itemize}
\item $z$ is a distinguished state called a {\em halting state}
\item $\Gamma$ is the tape alphabet
\item $\delta$ is a partial function from $Q \times \Gamma$ to $Q \cup \{z\} \times \Gamma \cup\{l,r\}$ called the {\em transition function}
\item $a \in Q$ is a distinguished state called the {\em start state}
\end{itemize} 

We say a transition $(S,I,OD,NS) \in M$ is a {\bf movement transition} if $OD \in \{l,r\}$. Otherwise we say it is an {\bf output transition}.
\end{definition}

Note that the only difference between this definition and Definition~\ref{def:tm} is that $\delta$ is now a function that results in a state and either a direction or an output, but not both. 

An interesting point to note is that for an output transition $(S, I, O, NS)$, the next configuration will have the machine in state $NS$ with input $O$. This means that if there are two output transitions in $M$ of the form $(S, I, O_1, S_1)$ and $(S_1, O_1, O_2, S_2)$, then we obtain an equivalent machine by replacing the first transition with $(S, I, O_2, S_2)$. Clearly the second machine will perform less steps than the first, but the same set of final configurations will be reached. What follows is a formalisation of this idea. 

We note the following about sequences of output transitions.

\begin{definition} \label{def:tm4cycle}
Let $M$ be a quadruple Turing machine. We say 

\begin{itemize}
	\item $S_0,I_0$ is in an {\bf output cycle in $M$} if there is a sequence of output transitions in $M$ of the form 
$(S_0, I_0, O_1, S_1), (S_1, O_1, O_2, S_2), (S_2, O_2, O_3, S_3), \ldots (S_k, O_k, I_i, S_i)$ for some $0 \leq i \leq k$.

	\item $S_0,I_0$ is in an {\bf output chain in $M$} if there is a sequence of output transitions in $M$ of the form 
	
$(S_0, I_0, O_1, S_1), (S_1, O_1, O_2, S_2), (S_2, O_2, O_3, S_3), \ldots (S_{k-1}, O_{k-1}, O_k, S_k)$  

where there is no transition of the form $(S_k, O_k, \_, \_)$ in $M$, and $\forall 1 \leq i \leq k$, $(S_0, I_0, O_1, S_1), \ldots (S_{i-1}, O_{i-1}, O_i, S_i)$ is not an output cycle.

	\item $S_0, I_0$ is in a {\bf movement chain in $M$} if $(S_0,I_0,D,S_1)$ is a movement transition, or there is a sequence of transitions in $M$ of the form 
	
$(S_0, I_0, O_1, S_1), (S_1, O_1, O_2, S_2), (S_2, O_2, O_3, S_3), \ldots (S_{k-1}, O_{k-1}, O_k, S_k), (S_k, O_k, D, S_{k+1})$ 

where 
$(S_k, O_k, D, S_{k+1})$ is a movement transition, 
all the other transitions are output transitions, and
$\forall 1 \leq i \leq k$, $(S_0, I_0, O_1, S_1), \ldots (S_{i-1}, O_{i-1}, O_i, S_i)$ is not an output cycle.

\end{itemize}
\end{definition}

Note that a movement chain is basically an output chain which is followed by an appropriate movement transition. We may also think of output chains and output cycles as two exclusive and exhaustive classes of sequences of output transitions. 

It is then simple to show the following result. 

\begin{lem}\label{lemma:output}
Let $M$ be a quadruple Turing machine. Then for any transition $(S,I,\_,\_)$ in $M$, $S,I$ is in either an output cycle, an output chain or a movement chain in $M$. 
\end{lem}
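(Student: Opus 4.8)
The plan is to start from an arbitrary transition $(S,I,\_,\_)$ in $M$ and trace forward the sequence of configurations (or more precisely, the sequence of state/symbol pairs) that the definition of an output chain/cycle describes, namely $(S_0,I_0) = (S,I)$, and given $(S_{i-1}, O_{i-1})$ with an output transition $(S_{i-1}, O_{i-1}, O_i, S_i)$ in $M$, set the next pair to be $(S_i, O_i)$. The first thing I would observe is that $M$ has only finitely many states and finitely many symbols, so there are only finitely many possible pairs $(S_j, O_j)$. Hence as we extend the sequence one of exactly three things must happen first: (i) we reach a pair $(S_k, O_k)$ for which there is \emph{no} transition $(S_k, O_k, \_, \_)$ in $M$ at all (since $\delta$ is only partial); (ii) we reach a pair $(S_k, O_k)$ for which the transition present is a \emph{movement} transition; or (iii) we never encounter either of the above, in which case, by finiteness of the set of pairs, the sequence of pairs must eventually repeat, producing a cycle.

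The argument then just matches each of these three cases against the three clauses of Definition~\ref{def:tm4cycle}. Case (iii) gives an output cycle: if $(S_k, O_k)$ is the first pair that equals an earlier pair $(S_i, O_i)$ (with $0 \le i \le k$), then the transitions $(S_0,I_0,O_1,S_1), \dots, (S_k, O_k, I_i, S_i)$ — noting $O_i = I_i$ when $i \ge 1$ and $I_0$ when $i = 0$, which is exactly the notation used in the definition — form an output cycle, and this is the \emph{first} such repetition so no proper initial segment is already a cycle. If no repetition occurs, case (i) gives an output chain directly (the dead-end pair $(S_k,O_k)$ has no outgoing transition, and minimality of $k$ guarantees no initial segment is an output cycle), and case (ii) gives a movement chain (the dead-end pair's transition is a movement transition; if $k=0$ this is just the degenerate case $(S_0,I_0,D,S_1)$ is itself a movement transition, which the definition explicitly allows). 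One should handle the $k=0$ boundary cases carefully in each clause — in particular an output chain with $k=0$ would be vacuous, so when the very first transition $(S,I,\_,\_)$ is already a movement transition it is the movement-chain clause (degenerate form) that applies, and when it is an output transition we proceed with $k \ge 1$.

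I do not expect any serious obstacle here; the lemma is a straightforward consequence of the pigeonhole principle applied to the finite pair-space together with a careful case split, and the paper itself signposts it as ``simple to show.'' The only thing requiring mild care is bookkeeping: making sure the three cases are genuinely exhaustive and mutually exclusive as claimed (which follows because, having fixed the first index $k$ at which one of the three stopping conditions holds, exactly one of them holds there), and making sure the off-by-one indexing in the definition (the roles of $I_0$ versus $O_i$, and the meaning of $0 \le i \le k$) is respected when exhibiting the cycle. So the write-up is essentially: (1) note finiteness of the pair-space; (2) define the forward-tracing sequence and let $k$ be the least index where tracing must stop; (3) split on why it stopped — no transition, a movement transition, or a repeat — and in each case read off the corresponding clause of Definition~\ref{def:tm4cycle}, using minimality of $k$ to rule out earlier output cycles.
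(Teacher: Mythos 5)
Your proposal is correct and follows essentially the same route as the paper's proof: split off the case where the initial transition is itself a movement transition, then trace the chain of output transitions forward until it either dead-ends, hits a movement transition, or repeats, matching each outcome to a clause of Definition~\ref{def:tm4cycle}. Your version is slightly more careful than the paper's (explicitly invoking finiteness of the state--symbol pair space to guarantee the trace terminates, and handling the $k=0$ boundary cases), but the underlying argument is identical.
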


\begin{proof}
If $(S,I,OD,NS)$ is a movement transition, then $S,I$ is trivially in a movement chain in $M$. Otherwise, $(S,I,OD,NS)$ is an output transition. We then follow the chain of transitions starting at $NS, OD$ until we either find there is no transition, encounter a movement transition, or encounter an output cycle. This results in either an output chain, a movement chain, or output cycle respectively. \qed
\end{proof}

\begin{definition}\label{definition:normalised}
Let $M$ be a quadruple Turing machine. We say $M$ is {\bf normalised} if for every output transition $(S,I,OD,NS) \in M$ one of the following conditions holds.
\begin{itemize}
	\item $S = NS$ and $I = OD$
	\item There is no transition for $NS,OD$
	\item There is a movement transition in $M$ of the form $(NS,OD,\_,\_)$
\end{itemize}
\end{definition}

It is then straightforward to show the following Proposition.

\begin{prop} \label{prop:quadnormal}
Let $M$ be a quadruple Turing machine. Then there is a normalised machine $M'$ which is productivity equivalent to $M$.
\end{prop}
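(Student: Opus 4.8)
The plan is to transform an arbitrary quadruple machine $M$ into a normalised one $M'$ by repeatedly ``short-cutting'' output transitions, using Lemma~\ref{lemma:output} to justify that every relevant $S,I$ pair falls into one of the three cases (output cycle, output chain, movement chain). First I would define the rewriting step: given an output transition $(S,I,OD,NS) \in M$ that violates the normalisation condition, we know there is a transition for $NS,OD$, it is not a movement transition (so it is an output transition $(NS,OD,O',NS')$), and we are not in the ``fixed point'' case $S=NS, I=OD$. Replace $(S,I,OD,NS)$ by $(S,I,O',NS')$. The key observation (already noted in the text just before Definition~\ref{def:tm4cycle}) is that executing $(S,I,OD,NS)$ and then $(NS,OD,O',NS')$ from a configuration with the head reading $I$ in state $S$ yields exactly the same configuration as executing the single transition $(S,I,O',NS')$ — the head does not move in either case, and the final written symbol and state agree. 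Hence this replacement is productivity-preserving on the blank-input computation (indeed on every computation): any halting computation of the new machine corresponds to a halting computation of the old one with the same final tape contents, and conversely, so productivity is preserved (including the ``both undefined'' case, since non-termination is also preserved — if $M$ loops then so does $M'$, as $M'$'s steps are a ``compressed'' version of $M$'s).

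Next I would argue termination of this rewriting process, which is the main subtlety. Each rewrite leaves the domain of $\delta$ unchanged (we only modify the value at $(S,I)$), so the set of state-symbol pairs with a defined transition is fixed. I would set up a well-founded measure that strictly decreases. A natural candidate: for each output transition $(S,I,OD,NS)$ that is not in the ``self-loop'' form $S=NS,I=OD$, follow the output chain starting at $NS,OD$ as in Lemma~\ref{lemma:output} and take the length until it terminates, hits a movement transition, or closes an output cycle; sum (or take a multiset of) these lengths over all output transitions. A single short-cut replaces $(S,I,OD,NS)$ with $(S,I,O',NS')$ where $(NS',O')$ is one step further along the chain, strictly decreasing that transition's contribution by one, while not increasing any other transition's contribution (the chains other transitions follow are unaffected, since we changed only one transition's value and the chains that pass through $(S,I)$ now reach their destination one step sooner). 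The one genuinely delicate case is output cycles: if $S,I$ lies on an output cycle, no short-cut can ever satisfy the normalisation condition, since following the chain from $NS,OD$ never leaves the cycle. Here I would observe that a state-symbol pair on an output cycle is unreachable in any terminating computation in the intended use (or, more carefully: the machine, once it enters such a cycle, never halts) — but to make the Proposition literally true as stated, I would instead note that for a pair $(S,I)$ on an output cycle, after finitely many short-cuts the transition becomes a self-loop $(S,I,I,S)$ (go around the cycle until $NS'=S$ and $O'=I$), which satisfies the first clause of Definition~\ref{definition:normalised}. So cycles are handled by the first normalisation clause, not the third, and the measure for cycle-transitions is ``distance around the cycle back to the starting pair,'' again strictly decreasing.

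The hard part will be packaging the termination argument cleanly so that one rewrite cannot inadvertently increase some other transition's measure — i.e.\ confirming that changing the value of $\delta$ at a single pair $(S,I)$ can only shorten, never lengthen, the chains emanating from other pairs, and that no new output cycle is created. A chain from some other $(S',I')$ that used to pass through $(S,I)$ now, upon arriving at $(S,I)$, jumps directly to the old chain's continuation, so its length can only drop; a chain that did not pass through $(S,I)$ is entirely unchanged; and since the short-cut target $(NS',O')$ already lay on the old chain from $(S,I)$, no new cyclic structure is introduced. With that monotonicity lemma in hand, the overall measure (say, the sum over all non-self-loop output transitions of the chain-length-to-resolution) is a natural number that strictly decreases with each rewrite, so the process terminates in a machine $M'$ in which every output transition satisfies one of the three clauses, i.e.\ $M'$ is normalised; and since each step preserved productivity equivalence and that relation is transitive, $M'$ is productivity equivalent to $M$, completing the proof.
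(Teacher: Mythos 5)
Your overall strategy -- short-cutting output transitions along the chains identified by Lemma~\ref{lemma:output} -- is the same idea as the paper's proof; the paper simply performs the replacement in one shot (each transition is replaced directly by the endpoint of its chain, or by the self-loop $(S,I,I,S)$ in the cycle case), whereas you iterate single short-cuts and argue termination via a decreasing measure. The single-step semantic argument and the monotonicity of your measure for transitions whose chains resolve (clauses 2 and 3 of Definition~\ref{definition:normalised}) are fine.

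The genuine gap is in your treatment of output cycles. Definition~\ref{def:tm4cycle} allows $i>0$: a pair $S_0,I_0$ counts as being ``in an output cycle'' when its chain merely \emph{leads into} a cycle that need not contain $(S_0,I_0)$ itself. For such a ``tail'' pair, your claim that finitely many short-cuts turn the transition into the self-loop $(S,I,I,S)$ is false: the target pair enters the cycle and then orbits it forever, never returning to $(S,I)$, and it never satisfies clauses 2 or 3 either, so your rewriting does not terminate (your cycle measure ``distance around the cycle back to the starting pair'' is undefined here). A minimal example: $\delta(a,0)=(a,0,1,b)$ and $\delta(b,1)=(b,1,1,b)$. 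Then $(a,0,1,b)$ violates every clause of Definition~\ref{definition:normalised}, but your short-cut replaces it by $(a,0,1,b)$ again -- a no-op -- so the process loops. The repair is exactly what the paper does: whenever $S,I$ is in an output cycle (tail or not), replace the transition directly by $(S,I,I,S)$, which satisfies clause 1 and preserves productivity equivalence because any computation reaching $(S,I)$ fails to terminate in both machines (the head never moves again in either). With that case split added, the rest of your argument goes through.
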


\begin{proof}
By Lemma~\ref{lemma:output}, for any transition $(S,I,\_,\_)$ in $M$, $S,I$ is either contained in an output cycle, an output chain or a movement chain. We generate $M'$ from $M$ as below.

\begin{itemize}
	\item If $(S,I,OD,NS) \in M$ and $S,I$ is in an output cycle, replace $(S,I,OD,NS)$ with $(S,I,I,S)$. 
	\item If $(S,I,OD,NS) \in M$ and $S,I$ is in an output chain $(S,I,OD,NS), \ldots (S_{k-1}, O_{k-1}, O_k, S_k)$, replace $(S,I,OD,NS)$ with $(S,I,O_k,S_k)$.
	\item If $(S,I,OD,NS) \in M$ and $M$ has a movement chain $(S,I,OD,NS), \ldots (S_{k-1}, O_{k-1}, O_k, S_k), (S_k, O_k, D, S_{k+1})$, replace $(S,I,OD,NS)$ with $(S,I,O_k,S_k)$
\end{itemize}

This generates a normalised machine $M'$ which is productivity equivalent to $M$. 
\qed
\end{proof}

For example, consider the three sets of transitions below.

\begin{center}
$(a,0,1,b), (b,1,0,c), (c,0,1,d), (d,1,0,e), (e,0,0,a)$ \\
$(a,0,1,b), (b,1,0,c), (c,0,1,d), (d,1,0,e)$ \\
$(a,0,1,b), (b,1,0,c), (c,0,1,d), (d,1,0,e), (e,0,r,a)$
\end{center}

The first is an output cycle. The second, as there is no transition for $e,0$, is an output chain. The third is a movement chain. Under the transformation of Proposition~\ref{prop:quadnormal}, these would be replaced by the transitions below.

\begin{center}
$(a,0,0,a), (b,1,1,b), (c,0,0,c), (d,1,1,d), (e,0,0,e)$ \\
$(a,0,0,e), (b,1,0,e), (c,0,0,e), (d,1,0,e)$  \\
$(a,0,0,e), (b,1,0,e), (c,0,0,e), (d,1,0,e), (e,0,r,a)$
\end{center}

Clearly if an output cycle is encountered during execution, then the machine will never terminate. This leads us to the following result. 

\begin{prop}\label{prop:4nocrap}
Let $M$ be a normalised quadruple machine, and let $M'$ be the machine resulting from deleting all transitions of the form $(S,I,I,S)$ from $M$. If $M$ terminates on the blank input, then $M'$ also terminates on the blank input. 
\end{prop}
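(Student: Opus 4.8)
The plan is to observe that in a normalised machine, a transition of the form $(S,I,I,S)$ is precisely the kind of transition that can only arise from an output cycle (by the construction in Proposition~\ref{prop:quadnormal}, or directly from Definition~\ref{definition:normalised}), and that such a transition, if ever executed on the blank input, immediately produces a configuration identical to the one that preceded it, hence the machine loops forever. So the key claim is that if $M$ terminates on the blank input, its computation never uses any transition of the form $(S,I,I,S)$, and therefore the restricted machine $M'$ performs exactly the same computation and halts in exactly the same way.

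First I would argue that a transition $(S,I,I,S)$ is a \emph{stalling} transition: if the configuration of $M$ at some step is $w\{S\}Iv$ (tape head reading the symbol $I$, machine in state $S$), then applying $(S,I,I,S)$ yields the identical configuration $w\{S\}Iv$. Since $M$ is deterministic, the very next step must again apply $(S,I,I,S)$, and so on forever; hence $M$ does not terminate on the blank input, contradicting the hypothesis. Therefore, in the (by assumption terminating) computation of $M$ on the blank input, no transition of the form $(S,I,I,S)$ is ever applied. Consequently every transition used in that computation is also a transition of $M'$ (since $M'$ is obtained from $M$ by deleting only transitions of this stalling form).

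Next I would run the two computations in parallel from the common initial configuration $\{a\}0$ and show by induction on the number of steps that they agree: at each step, the transition $M$ applies is available (and is the unique applicable transition, by determinism) in $M'$ as well, because it is not a stalling transition and $M'$ retains all non-stalling transitions of $M$. The only subtlety is whether deleting transitions can cause $M'$ to halt \emph{earlier} than $M$ — but this is fine for the statement, since we only need $M'$ to terminate, which it does, reaching exactly $M$'s halting configuration in exactly $M$'s number of steps (in fact it cannot halt earlier, since every step $M$ takes is mirrored by $M'$). Thus $M'$ terminates on the blank input.

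The only thing to be careful about — the ``main obstacle,'' such as it is — is making sure that a transition of the form $(S,I,I,S)$ really is inert when executed: one must check that in the quadruple model an output transition $(S,I,O,NS)$ executed on a head reading $I$ writes $O$, leaves the head in place, and moves to state $NS$, so that with $O=I$ and $NS=S$ nothing changes at all. Once that is pinned down, the determinism argument closes the loop immediately, and the rest is the routine parallel-computation induction, which I would not spell out in full.
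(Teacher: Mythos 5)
Your proposal is correct and follows essentially the same route as the paper's proof: observe that a transition $(S,I,I,S)$ leaves the configuration unchanged, so by determinism a terminating computation can never use one, and hence deleting them does not affect the execution. You simply spell out in more detail the step the paper dismisses with ``clearly.''
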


\begin{proof}
Clearly as $M$ terminates on the blank input, no transition of the form $(S,I,I,S)$ is used in this execution. Hence deleting these transitions from $M$ will have no effect. 
\qed
\end{proof}

We can now show the main result of this section, which is that a normalised quadruple machine can be easily rewritten as a quintuple machine which is productivity equivalent. 

\begin{prop}\label{prop:4to5}
Let $M$ be a normalised quadruple machine containing no transition of the form $(S,I,I,S)$. Then there is a quintuple machine $M'$ that is productivity equivalent to $M$.
\end{prop}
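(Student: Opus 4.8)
The plan is to build $M'$ by a purely local rewriting of each transition of $M$, and then to verify productivity equivalence by showing that the computation of $M'$ on the blank input simulates that of $M$, collapsing certain adjacent pairs of steps into single steps while leaving the final tape contents untouched.

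For the construction: a movement transition $(S,I,D,NS)$ of $M$ is replaced by the quintuple $(S,I,I,D,NS)$ --- write back the symbol just read, then move. For an output transition $(S,I,O,NS)$, observe that since $M$ is normalised and has no transition of the form $(S,I,I,S)$, the first clause of Definition~\ref{definition:normalised} cannot hold; hence either $M$ has no transition for $NS,O$ (this covers $NS=z$), or $M$ has a \emph{movement} transition $(NS,O,D,NS')$ for $NS,O$. In the former case I set $\delta'(S,I)=(S,I,O,r,z)$, and in the latter $\delta'(S,I)=(S,I,O,D,NS')$. Since $\delta$ is a partial function, each pair $S,I$ receives at most one value, so $\delta'$ is a partial function of the required type and $M'=(Q\cup\{z\},\Gamma,\delta',a)$ is a quintuple machine.

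For the simulation, I would partition the steps of the computation of $M$ on the blank input into \emph{blocks}: an output step together with the step immediately following it (when one exists) forms a block of two, and every other step is a block of one. This is unambiguous because, by normalisation, an output step $(S,I,O,NS)$ is never followed by another output step --- its successor, if any, is the movement transition $(NS,O,D,NS')$ --- so output steps are never consecutive. I then prove by induction on the number of blocks that after $j$ blocks $M'$ is in exactly the configuration $M$ is in at the start of block $j+1$, and that the single step $M'$ takes there reproduces the whole block: a lone movement step $(S,I,D,NS)$ is matched by $(S,I,I,D,NS)$; a two-step block (output $(S,I,O,NS)$ then movement $(NS,O,D,NS')$) is matched by $(S,I,O,D,NS')$, which writes $O$ at the head cell, moves $D$ and ends in state $NS'$, the intermediate configuration of $M$ merely being skipped; and a lone \emph{terminal} output step $(S,I,O,NS)$ (with $NS=z$, or with $M$ stuck at $NS,O$) is matched by $(S,I,O,r,z)$, after which $M'$ halts --- here $M'$ makes one extra rightward head move that $M$ does not, but since $M$ has already halted this leaves the tape contents unchanged.

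Productivity equivalence then follows by reading off the correspondence: the blocks are in order-preserving bijection with the steps of $M'$, and each block is one or two steps of $M$, so the computation of $M$ on the blank input is infinite iff that of $M'$ is; and when both halt the block analysis shows their final tapes, hence their productivities, agree. (That $M'$ halts exactly when $M$ does also uses that $\delta'(S,I)$ is defined iff $\delta(S,I)$ is, together with the fact that a terminal output step is rewritten into a transition of $M'$ entering $z$.) I expect the main obstacle to be book-keeping rather than ideas: making the block decomposition genuinely a partition of $M$'s steps --- which is exactly where the hypotheses ``normalised'' and ``no $(S,I,I,S)$'' are used, to forbid consecutive output steps and to force the successor of an output step to be a movement transition --- and handling the terminal-output case carefully enough to see that the spurious head move introduced in $M'$ cannot affect productivity.
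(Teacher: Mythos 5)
Your construction of $M'$ is exactly the one the paper uses (movement transitions become symbol-preserving quintuples $(S,I,I,D,NS)$, output transitions with no successor transition become halting transitions $(S,I,O,r,z)$, and output transitions whose successor is a movement transition are fused into a single quintuple $(S,I,O,D,NS')$), so your proposal takes essentially the same approach. The paper's proof in fact stops at the construction, whereas your block-decomposition simulation argument supplies the verification of productivity equivalence that the paper leaves implicit --- an elaboration rather than a divergence.
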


\begin{proof}
As $M$ is normalised and does not contain a transition of the form $(S,I,I,S)$, then every transition is either 

\begin{itemize}
	\item a movement transition $(S,I,D,NS)$
  \item an output transition  $(S,I,O,NS)$ where there is no transition for $NS,O$ 
  \item an output transition  $(S,I,O,NS)$ where the transition for $NS,O$ is a movement transition
\end{itemize}

We construct $M'$ by transforming each transition in $M$ as follows.  

\begin{itemize}
	\item For each movement transition $(S,I,D,NS)$ in $M$, there is a transition $(S,I,I,D,NS)$ in $M'$
  \item For each output transition $(S,I,O,NS)$ in $M$ where there is no transition for $NS,O$ in $M$, there is a transition $(S,I,O,r,z)$ in $M'$
  \item For each output transition $(S,I,O,NS)$ in $M$ where the transition for $NS,O$ in $M$ is a movement transition $(NS,O,D,NS1)$, there is a transition $(S,I,O,D,NS1)$ in $M'$. 
\end{itemize}

Note that in the third case there will also be a transition $(NS,O,O,D,NS1)$ in $M'$ resulting from the first case. 
\qed
\end{proof}

Taken together, Propositions~\ref{prop:4nocrap} and \ref{prop:4to5} show that for every normalised quadruple machine that terminates on the blank input, there is a quintuple machine that is productivity equivalent. Hence any machines relevant for the busy beaver will be found by searching amongst the quintuple machines alone. In other words, the quadruple machines will not contribute more to the busy beaver problem than the quintuple ones will. 

This of course does not rule out the possibility that there are some intriguing machines to be found amongst the quadruple machines. 
However, it seems unlikely that there is a productivity-equivalent 5-state 2-symbol quadruple machine machine for any 5-state 2-symbol quintuple machine that terminates on the blank input. Clearly it is possible to find {\em some} productivity-equivalent quadruple machine for any 5-state 2-symbol quintuple machine (Ross in fact gives one such transformation), but it seems unlikely that there is a productivity-equivalent quintuple machine \textit{with only 5 states and 2 symbols}. It is an item of future work to settle this issue, either by providing such a transformation or showing that it is impossible. For now, we note that searching quintuple machines subsumes searching for quadruple ones. 

\section{Normal Form}
\label{sec:normal}

In order to solve the busy beaver problem, we are only interested in evaluating machines on the blank input, which means we are able to reduce the number of machines that require non-trivial analysis. In this section we establish the results that show this. To begin with, it is simple to establish the following results, whose proofs are trivial. 

\begin{lem}
\label{lemma:a01}
Let $M$ be a Turing machine containing a tuple of the form $(a,0,\_,\_,a)$. Then the activity of $M$ is $\infty$. 
\end{lem}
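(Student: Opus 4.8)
The plan is to run $M$ on the blank input and show that it never leaves the start state $a$, so it can never reach $z$ and therefore never halts. First I would note that the initial configuration on the blank tape is $\{a\}0$: the machine is in state $a$, the head scans a cell containing the blank symbol $0$, and every other cell is blank. Since $(a,0,O,D,a)$ is a transition of $M$ for some $O \in \Gamma$ and $D \in \{l,r\}$, and $\delta$ is a deterministic partial function, this is the \emph{unique} transition applicable to any configuration whose state is $a$ and whose scanned symbol is $0$. Applying it writes $O$ in the scanned cell, moves the head one square in direction $D$, and leaves the machine in state $a$.

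The core of the argument is an induction on the number of execution steps, with induction hypothesis: after $k$ steps the machine is in state $a$ scanning a cell that contains $0$, and the only cells that have been altered are the $k$ cells previously scanned, namely positions $0, d, 2d, \dots, (k-1)d$ (writing $d = +1$ if $D = r$ and $d = -1$ if $D = l$, and taking the initial head position to be $0$). The base case $k = 0$ is immediate. For the step, the induction hypothesis forces the $(k+1)$-st transition to be $(a,0,O,D,a)$, so afterwards the state is again $a$; the newly scanned cell is position $kd$, which is not among $\{0, d, \dots, (k-1)d\}$ and hence still contains the blank symbol $0$; and the set of altered cells becomes $\{0, d, \dots, kd\}$. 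This re-establishes the hypothesis.

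Since at every step the machine is in state $a$ and $a \neq z$, the computation never halts, so by definition $activity(M) = \infty$. The only point requiring any care --- and the reason the induction hypothesis is stated with the extra clause about which cells have been altered --- is the claim that the cell entered at each step is blank; this holds because the head moves exactly one cell per step and always strictly in the same direction, so it keeps advancing into fresh, previously untouched (hence blank) tape. When $O = 0$ this clause is superfluous, but carrying it makes the argument uniform in $O$. I do not expect any genuine obstacle here: the statement is essentially immediate once the computation is written out, which is presumably why the paper calls its proof trivial.
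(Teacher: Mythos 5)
Your proof is correct and is exactly the argument the paper has in mind when it declares this lemma's proof trivial: on the blank input the machine is locked into the $(a,0,\_,\_,a)$ transition forever, since the fixed direction of movement means the head always enters fresh blank tape, so it never reaches $z$ (nor an undefined state--symbol pair). The induction with the bookkeeping about which cells have been altered is a careful but faithful write-up of that observation; no gap.
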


\begin{lem}
\label{lemma:a01a}
Let $M$ be a Turing machine containing a tuple of the form $(a,0,\_,\_,z)$. Then the activity of $M$ is $1$. 
\end{lem}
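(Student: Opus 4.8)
The plan is to argue directly from the definition of execution on the blank input. The initial configuration places the machine in its start state $a$, with the tape head on a blank cell and the whole tape blank; in the paper's notation this is $\{a\}0$. The only fact we need about $M$ is that $\delta(a,0)$ is defined and has the form $\langle z, \gamma', D\rangle$ for some $\gamma' \in \Gamma$ and $D \in \{l,r\}$, which is precisely what the hypothesis ``$M$ contains a tuple of the form $(a,0,\_,\_,z)$'' asserts; moreover, since $\delta$ is a partial function, this is the unique transition applicable to the pair $(a,0)$.

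First I would observe that the first execution step must apply this transition, since at the start the machine is in state $a$ reading the blank symbol $0$, so the pair consulted is exactly $(a,0)$. The resulting configuration has machine-state component $z$. Next, I would invoke the fact recorded in Section~\ref{sec:definitions} that there are no transitions out of the halting state $z$, so no further step is possible and the computation halts. Hence the execution on the blank input terminates in exactly one step, and therefore $activity(M) = 1$ by the definition of activity.

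There is essentially no obstacle here: the statement falls out immediately once one unwinds the definitions of the blank initial configuration, of a single execution step, and of when a machine halts. The only points meriting a word of care are that the transition is genuinely the one applied at step one (guaranteed because the start state is $a$ and the blank symbol is $0$) and that determinism, i.e.\ the partiality of $\delta$, rules out any competing transition — both of which are immediate from the preliminaries.
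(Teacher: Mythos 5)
Your argument is correct and is exactly the intended one: the paper simply declares this lemma's proof trivial and omits it, and your unwinding of the initial configuration, the unique applicable transition $(a,0,\_,\_,z)$, and the absence of transitions out of $z$ is precisely that trivial argument made explicit.
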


Hence we need only consider machines whose first transition is of the form $(a,0,\_,\_,b)$. It should be noted that we can assume that the second state used in the machine is $b$ (rather than say $c$ or $d$), as justified by the following simple lemma. 

\begin{lem}
\label{lemma:a02}
Let $M$ be a $k$-halting $n$-state $m$-symbol Turing machine containing a tuple of the form $(a,0,\_,\_,S)$ where $S \not \in \{a,b,z\}$. Then there is another $k$-halting $n$-state $m$-symbol Turing machine $M'$ containing the tuple $(a,0,\_,\_,b)$ such that $M$ and $M'$ are productivity and activity equivalent.
\end{lem}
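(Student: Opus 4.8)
The plan is to exhibit $M'$ as a relabelling of $M$ in which the state names $b$ and $S$ are interchanged. Since the names attached to states carry no computational meaning, such a relabelled machine performs exactly the same computation as $M$ up to renaming of states, and in particular leaves the tape contents unchanged at every step; this gives productivity and activity equivalence immediately.

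Concretely, first I would define the permutation $\pi$ of $Q \cup \{z\}$ that swaps $b$ and $S$ and fixes every other state. This is a well-defined bijection, and because the hypothesis guarantees $S \notin \{a,b,z\}$, it fixes both the start state $a$ and the halting state $z$. I would then define $M'$ to have the same state set $Q \cup \{z\}$, the same tape alphabet $\Gamma$, the same start state $a$, and transition function obtained from that of $M$ by the rule that $(p,\gamma,\gamma',D,q)$ is a transition of $M$ iff $(\pi(p),\gamma,\gamma',D,\pi(q))$ is a transition of $M'$. Since $\pi$ is a bijection of $Q \cup \{z\}$ that fixes $z$, this makes $M'$ an $n$-state $m$-symbol machine, and the number of transitions whose new state is $z$ is the same in $M'$ as in $M$, so $M'$ is $k$-halting. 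Moreover, the transition $(a,0,\_,\_,S)$ of $M$ (which exists since $\delta_M(a,0)$ is defined with new state $S$) maps to the transition $(a,0,\_,\_,\pi(S)) = (a,0,\_,\_,b)$ of $M'$, as required.

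It then remains to check productivity and activity equivalence. For this I would lift $\pi$ to a map $\hat\pi$ on configurations: send a configuration with tape $w$, head position $i$ and state $q$ to the configuration with the same tape $w$, the same head position $i$, and state $\pi(q)$. This map is a bijection on configurations, it fixes the initial blank-tape configuration (whose state is $a$, fixed by $\pi$), and by the defining property of $\delta_{M'}$ it satisfies $C \Rightarrow_M C'$ if and only if $\hat\pi(C) \Rightarrow_{M'} \hat\pi(C')$. Hence the computation of $M'$ on the blank input is the image under $\hat\pi$, step for step, of the computation of $M$ on the blank input; it halts after exactly the same number of steps (possibly $\infty$), and when it halts the tape contents --- which $\hat\pi$ leaves untouched --- are identical to those of $M$. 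Therefore $M$ and $M'$ have the same activity and the same productivity.

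There is essentially no hard part here: the only things to be careful about are that $\pi$ must fix $a$ and $z$ (which is exactly why the hypothesis excludes $S \in \{a,b,z\}$) and that halting transitions are identified by their new state being $z$, so that $k$-haltingness is preserved under the relabelling.
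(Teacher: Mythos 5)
Your proof is correct and follows exactly the paper's approach: the paper simply defines $M'$ by swapping all occurrences of $S$ and $b$ in $M$ and declares the result trivial, while you spell out the details (that the swap fixes $a$ and $z$, preserves $k$-haltingness, and lifts to a step-by-step bijection on configurations). No gap; you have just made explicit what the paper leaves implicit.
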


\begin{proof}
Let $M'$ be the machine found by swapping all occurrences of $S$ and $b$ in $M$. The result then trivially follows. 
\qed
\end{proof}

This means that we can insist that the second state encountered in the machine (after the start state $a$) is $b$. Similarly we can insist that the third state (if any) encountered is $c$, and so on. In particular it is straightforward to show the result below. 

\begin{lem}
\label{lemma:statename}
Let $M$ be a Turing machine, and let $S_1, S_2$ be two distinct states in $M$ such that $S_1, S_2 \not \in \{a,z\}$. Let $M'$ be the machine obtained by swapping the states $S_1$ and $S_2$ in every transition in $M$. Then $M'$ is productivity and activity equivalent to $M$ and $M'$ contains the same number of state, symbols and halting transitions as $M$. 
\end{lem}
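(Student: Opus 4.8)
The plan is to exhibit the state swap as a pure relabelling of $M$ and to show that relabelling preserves the entire computation on the blank input up to renaming the state component of each configuration. Let $\pi$ be the permutation of $Q\cup\{z\}$ that transposes $S_1$ and $S_2$ and fixes every other state. Since $S_1,S_2\notin\{a,z\}$ we have $\pi(a)=a$ and $\pi(z)=z$, and $\pi$ is an involution, so $M'$ is obtained from $M$ by applying $\pi$ simultaneously to the $State$ and $NewState$ components of every transition, and $M$ is recovered from $M'$ in the same way. This is just the generalisation of the trick used in Lemma~\ref{lemma:a02} from the particular transposition used there to an arbitrary transposition of two non-start, non-halt states.

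First I would record the defining property of $M'$: for all $q\in Q$, $\gamma,\gamma'\in\Gamma$, $D\in\{l,r\}$ and $q'\in Q\cup\{z\}$, $\delta'(q,\gamma)=\langle q',\gamma',D\rangle$ iff $\delta(\pi(q),\gamma)=\langle\pi(q'),\gamma',D\rangle$ (using $\pi^{-1}=\pi$). Extend $\pi$ to configurations by acting only on the machine-state component, leaving the tape contents and head position untouched, and write $\pi(C)$ for the result. From the defining property it is immediate that $C\Rightarrow C''$ in $M$ iff $\pi(C)\Rightarrow\pi(C'')$ in $M'$, because $\pi$ leaves the symbol read, the symbol written, and the direction of movement unchanged; a symmetric argument gives the converse.

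Next I would run the obvious induction on the number of steps. The initial configuration of $M$ on the blank input is $\{a\}0$, and $\pi(\{a\}0)=\{a\}0$ since $\pi(a)=a$, so both machines start in the same configuration; by the one-step correspondence, after $k$ steps $M'$ is in $\pi(C_k)$, where $C_k$ is the configuration of $M$ after $k$ steps. A computation halts exactly when it reaches a configuration whose state is $z$ (or, under the implicit-halt convention, a state/symbol pair with no defined transition, which is equally preserved by the relabelling), and since $\pi(z)=z$ the computation of $M$ halts after $k$ steps iff that of $M'$ does; hence $activity(M)=activity(M')$, including the $\infty$ case. When the computation halts, the final tape contents of $M'$ are literally those of $M$, as $\pi$ never touches the tape, so the counts of non-blank symbols agree and $productivity(M)=productivity(M')$ (both defined or both undefined). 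Finally $\pi$ is a bijection of $Q\cup\{z\}$ fixing $z$ and does not alter $\Gamma$ or any $Input$/$Output$/$Direction$ component, so $M'$ has the same state set, the same tape alphabet, and the same number of transitions of the form $(\_,\_,\_,\_,z)$ as $M$.

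There is no real obstacle here; the content is routine. The only point requiring care is that the swap must be applied simultaneously in the $State$ and $NewState$ slots of every transition, and that the hypothesis $S_1,S_2\notin\{a,z\}$ is exactly what guarantees the start state and halt state — and therefore both the start configuration and the halting criterion — are left invariant, which is what makes the step-by-step correspondence go through.
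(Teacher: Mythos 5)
Your proof is correct and is exactly the relabelling argument the paper has in mind: the paper states this lemma without proof as "straightforward," relying on the same state-swapping idea it uses in the one-line proof of Lemma~\ref{lemma:a02}, and your permutation-plus-induction-on-steps argument simply spells out the details it omits. No issues.
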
 

It is similarly straightforward to show a similar result for symbols. 

\begin{lem}
\label{lemma:symbolname}
Let $M$ be a Turing machine, and let $O_1, O_2$ be two distinct symbols in $M$ such that $O_1 \not = 0$ and $O_2 \not = 0$. Let $M'$ be the machine obtained by swapping the symbols $O_1$ and $O_2$ in every transition in $M$. Then $M'$ is productivity and activity equivalent to $M$ and $M'$ contains the same number of state, symbols and halting transitions as $M$. 
\end{lem}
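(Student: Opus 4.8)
The plan is to mirror the proof of Lemma~\ref{lemma:statename}, replacing the role of states by symbols. Concretely, I would define $M'$ to be the machine obtained from $M$ by applying the transposition $\tau$ that swaps $O_1$ and $O_2$ (and fixes every other symbol, including the blank $0$, since $O_1,O_2\neq 0$) simultaneously to every \textit{Input} and every \textit{Output} component of every transition of $M$; states and directions are left untouched. The first claim — that $M'$ has the same number of states, symbols, and halting transitions as $M$ — is immediate, since $\tau$ is a bijection on $\Gamma$ that does not touch states or the \textit{NewState} component, so the set of halting transitions (those with $\mathit{NewState}=z$) is carried bijectively onto itself.

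For the equivalence claim, the key observation is that $\tau$ extends to a bijection $\hat\tau$ on configurations: given a configuration $C$, let $\hat\tau(C)$ be the configuration with the same machine state and head position but with each tape symbol $s$ replaced by $\tau(s)$. Because $\tau(0)=0$, a tape is blank in $C$ iff it is blank in $\hat\tau(C)$; in particular $\hat\tau$ fixes the initial (all-blank) configuration, so $M$ and $M'$ start from the same configuration on the blank input. The heart of the argument is a one-step simulation lemma: if $C \Rightarrow D$ under $M$, then $\hat\tau(C) \Rightarrow \hat\tau(D)$ under $M'$. This follows by unwinding the definitions: the transition of $M$ used at $C$ is some $(S,I,O,Dir,NS)$ with $I$ equal to the symbol under the head in $C$; by construction $M'$ contains $(S,\tau(I),\tau(O),Dir,NS)$, and the symbol under the head in $\hat\tau(C)$ is exactly $\tau(I)$, so this is the transition $M'$ applies, producing precisely $\hat\tau(D)$ (the head moves the same way, and writing $\tau(O)$ in $\hat\tau(C)$ matches writing $O$ in $C$ after applying $\hat\tau$). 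Since $\tau$ is an involution, the converse simulation holds symmetrically, so $\hat\tau$ is an isomorphism of computations.

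From this simulation lemma the equivalences follow by a routine induction on the number of steps: the computation of $M$ on the blank input and the computation of $M'$ on the blank input are in lockstep under $\hat\tau$, so one halts iff the other does, after the same number of steps (activity equivalence); and when they halt, the final tapes are related by $\hat\tau$, which preserves the number of non-blank symbols since $\tau$ maps nonzero symbols to nonzero symbols and $0$ to $0$ (productivity equivalence).

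I do not expect any real obstacle here — the statement is deliberately parallel to Lemma~\ref{lemma:statename}, and the only point requiring a moment's care is the hypothesis $O_1,O_2\neq 0$, which is exactly what guarantees $\tau$ fixes the blank symbol and hence that $\hat\tau$ preserves blankness of tapes, the initial configuration, and non-blank counts. (By contrast, in the state version the analogous care is that $S_1,S_2\notin\{a,z\}$, so $a$ and $z$ are fixed.) Everything else is the standard bookkeeping of transporting a computation along a relabelling bijection.
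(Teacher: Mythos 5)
Your proposal is correct and is exactly the standard relabelling argument the paper has in mind; the paper simply omits the proof of Lemma~\ref{lemma:symbolname} as ``straightforward,'' by analogy with the state-swapping lemmas. Your identification of $O_1,O_2\neq 0$ as the hypothesis that makes the transposition fix blankness (and hence the initial configuration and the non-blank count) is the right point of care, mirroring $S_1,S_2\notin\{a,z\}$ in the state version.
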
 

Taken together, Lemmas~\ref{lemma:statename} and ~\ref{lemma:symbolname} mean that we can insist on a specific order in which the states and symbols appear in the execution of the machine. In particular, given that the first state must be $a$, we can insist that the second state encountered be $b$, the third one $c$ and so forth. Similarly, we can insist that the first non-blank symbol countered be 1, the second 2, and so on. In addition, as we know that the second step executed will always be the $b,0$ transition, we can insist that in any transition of the form $(b,0,O,D,S)$ we have $O \in \{0,1,2\}$. 

This means that we can assume that the first transition is of the form $(a,0,O,D,b)$ for some output $O$ and some direction $D$. Now if $O$ is blank, ie the transition is of the form $(a,0,0,D,b)$, then either the tape remains blank throughout the entire execution of the machine, or there is a transition $(s,0,O,D,NS)$ where $s \not = a$ and $O \not = 0$, in which case we simply swap $a$ and $s$. This leads us to the result below. 

\begin{lem}
\label{lemma:a03}
Let $M$ be a $k$-halting $n$-state $m$-symbol Turing machine of finite activity and productivity $\geq 1$ containing a tuple of the form $(a,0,0,\_,NS)$ where $NS \not = z$. Then there is another $k$-halting $n$-state $m$-symbol Turing machine $M'$ of finite activity containing the tuple $(a,0,O,\_,\_)$ where $O \not = 0$ such that $M'$ is productivity equivalent to $M$.
\end{lem}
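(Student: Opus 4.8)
The plan is to do a case analysis on what happens when $M$ is run on the blank input, starting from the given first transition $(a,0,0,D,NS)$ with $NS \neq z$. Since the first output is blank, the tape is still blank after step one, and the machine is now in state $NS \neq a$ (by Lemma~\ref{lemma:a01}, if $NS = a$ the activity would be $\infty$, contradicting finite activity) reading a blank. So the execution continues with a blank tape and a succession of states. Two things can happen during this ``blank-tape prefix'' of the computation: either the machine eventually executes a transition that writes a non-blank symbol while the tape is (still) blank, or it never does.

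First I would dispose of the second case. If no transition executed while the tape is blank ever writes a non-blank symbol, then the tape is blank for every configuration in the computation, and in particular the tape is blank when the machine halts — but this contradicts the hypothesis that $\mathrm{productivity}(M) \geq 1$. (One must be slightly careful: $M$ has finite activity, so the computation does halt, and the tape contents at halting are exactly what was written along the way; if nothing non-blank is ever written the productivity is $0$.) Hence this case cannot occur.

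So we are in the first case: there is a first step in the computation at which the machine, with a blank tape, is in some state $s$ reading $0$ and executes a transition $(s,0,O,D',NS')$ with $O \neq 0$. Note $s \neq a$ is not forced in general, but at the very first such step the transition used is literally the one attached to the pair $(s,0)$; if $s = a$ then $M$ already contains a tuple $(a,0,O,\_,\_)$ with $O \neq 0$ and we are done with $M' = M$. Otherwise $s \neq a$, and I would let $M'$ be the machine obtained from $M$ by swapping the states $a$ and $s$ throughout, as in the proof idea preceding the lemma. The point is that $M'$ then contains the tuple $(a,0,O,D',\ldots)$ with $O \neq 0$, so $M'$ has the required syntactic form, and by an argument in the spirit of Lemma~\ref{lemma:statename} (though here one of the swapped states is $a$, so it is not a literal instance of that lemma) the swap produces a machine whose computation on the blank input is the ``same up to relabelling'' from the point state $s$ is first reached — more precisely, the prefix of $M$'s computation from $\{a\}0$ up to the first configuration in state $s$ with blank tape is replaced by a prefix in $M'$ that leads from $\{a\}0$ directly to the corresponding configuration. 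Since the tape is blank throughout that prefix in $M$ and the swap only renames states, the tail of the computation, and hence the halting tape contents, are unchanged: $M'$ is productivity equivalent to $M$ (and still $k$-halting, $n$-state, $m$-symbol, of finite activity).

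The main obstacle is the bookkeeping in the last step: one has to argue carefully that swapping $a$ with an arbitrary other state $s$ does not disturb the tail of the computation, which requires knowing that in $M$'s computation the state $a$ is never revisited \emph{before} $s$ is first reached (otherwise the swap could redirect the prefix), and that once $s$ is reached the two machines are in correspondence. The cleanest way to handle this is to take $s$ to be the state in the \emph{first} configuration where a non-blank symbol is about to be written and to invoke Lemma~\ref{lemma:a01} to rule out an intervening return to $a$ (a return to $a$ reading $0$ with blank tape would make the activity infinite). After that, productivity equivalence is immediate since everything before the swap point happens on a blank tape and everything after is identical up to the state renaming, which does not affect the tape.
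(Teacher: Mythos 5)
Your proposal is correct and follows essentially the same route as the paper: identify the first transition in the execution of $M$ that writes a non-blank symbol, observe that its state $S$ is not $a$, and swap $a$ with $S$ throughout the machine. (The paper's proof is exactly this, stated tersely; note that your $s=a$ subcase is vacuous, since $\delta$ is a partial function and $\delta(a,0)$ is already fixed by hypothesis to write $0$.)
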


\begin{proof}
As $productivity(M) \geq 1$, there must be a transition of the form $(S,0,O,\_,\_)$ $S \not = a$ and $O \not = 0$ in $M$, and that this is the first transition in the execution of $M$ which writes a non-blank symbol on the tape. 
Let $M'$ be the machine found by swapping all occurrences of $a$ and $S$ in $M$. The result then trivially follows. 
\qed
\end{proof}

Note that a similar property will follow for a machine of activity $\infty$; 
however, this case is uninteresting for the busy beaver problem. Note also that the activity of $M'$ will be less than that of $M$, as this change effectively ignores the initial execution steps which do not change the blank tape. 

For example, consider the machine in Figure~\ref{fig:m82raw}, which is Dragon 82, a 4-state 3-symbol machine of activity 250,096,776 and productivity 15,008. 
Normalising this machine as above (by swapping states $a$ and $b$) gives the machine in Figure~\ref{fig:m82refine1} which has activity 250,096,775 and productivity 15,008. 

\begin{figure}
\centering
\includegraphics[width=0.7\textwidth]{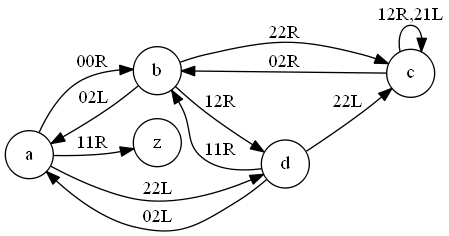}
\caption{Dragon 82}
\label{fig:m82raw}
\end{figure}

\begin{figure}
\centering
\includegraphics[width=0.7\textwidth]{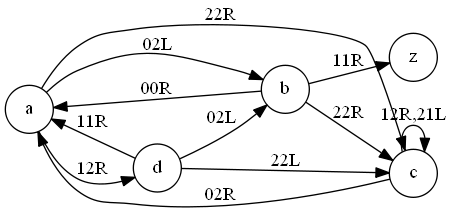}
\caption{Dragon 82 transformed}
\label{fig:m82refine1}
\end{figure}

We can perform a similar transformation under some other circumstances, as specified in the following lemma. Note that as machines of productivity 0 are irrelevant, we only consider those of finite activity and productivity $\geq 1$. 

\begin{lem}
\label{lemma:a03a}
Let $M$ be a $k$-halting $n$-state $m$-symbol Turing machine of finite activity and productivity $\geq 1$ which violates the blank tape condition. Then there is another $k$-halting $n$-state $m$-symbol Turing machine $M'$ of finite activity which is productivity equivalent to $M$ and which satisfies the blank tape condition.
\end{lem}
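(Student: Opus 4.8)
The plan is to locate the last moment in the computation of $M$ at which the tape is blank, and to ``restart'' the machine from that moment by a relabelling of states. Since $activity(M)$ is finite, the computation of $M$ on the blank input is a finite sequence $C_0 \Rightarrow C_1 \Rightarrow \cdots$. Because $M$ violates the blank tape condition (Definition~\ref{def:blank}), the set of indices $i \geq 1$ for which $C_i$ has a blank tape is non-empty and finite; let $t$ be its maximum, and write $C_t = \{s\}0$ (the tape being entirely blank, the head position is immaterial, so $C_t$ is the start configuration with $a$ replaced by $s$).

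First I would argue that $s \notin \{a,z\}$. If $s = a$, then $C_t$ is literally the initial configuration $C_0$, and determinism forces the computation to repeat from step $t$ onwards, contradicting the finiteness of $activity(M)$ (this is in effect Lemma~\ref{lemma:a01}). If $s = z$, then $M$ has halted at step $t$ with a blank tape, so $productivity(M) = 0$, contradicting $productivity(M) \geq 1$. Hence $s \in Q \setminus \{a\}$. Then I would define $M'$ to be the machine obtained from $M$ by swapping the two state labels $a$ and $s$ everywhere in $\delta$ — this is the relabelling of Lemma~\ref{lemma:statename}, extended to include the start state, and it plainly yields another $k$-halting $n$-state $m$-symbol Turing machine (state counts, symbol counts and the number of halting transitions are all preserved by a relabelling).

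By construction, the behaviour of $M'$ out of its start state $a$ coincides with that of $M$ out of $s$, and, more generally, the computation of $M'$ on the blank input is step-for-step the image under the $a \leftrightarrow s$ swap of the tail $C_t \Rightarrow C_{t+1} \Rightarrow \cdots$ of the computation of $M$. Since a state relabelling never alters the tape contents of a configuration, $M'$ halts exactly when $M$ does from $C_t$ — which it does, as $M$ has finite activity — and with the same final tape, so $M'$ has finite activity and $productivity(M') = productivity(M)$. Finally, $M'$ satisfies the blank tape condition: for each $j \geq 1$ the configuration of $M'$ after $j$ steps is the relabelling of $C_{t+j}$, and $t+j > t$, so by maximality of $t$ that configuration has a non-blank tape.

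The step I expect to require the most care is the relabelling that touches the start state $a$: one must check that after the swap the designated start state is still $a$ and that $M'$ genuinely simulates the tail of $M$'s run (rather than some other computation), which is why setting up the ``restart at $\{s\}0$'' picture precisely is the crux; once that is in place the remaining verifications — well-definedness of $t$, the exclusions $s \neq a, z$, and the invariance of the various counts — are routine.
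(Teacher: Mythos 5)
Your proof is correct and follows essentially the same route as the paper's: identify the last configuration in the (finite) execution trace with a blank tape, rule out that its state is $a$ or $z$ using finite activity and positive productivity respectively, and swap $a$ with that state so that the new machine's run is the tail of the old one. The paper's version is terser but relies on exactly the same maximality-of-the-last-blank-configuration observation to get the blank tape condition for $M'$.
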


\begin{proof}
As the execution of $M$ on the blank tape terminates, there must be a final configuration in this execution trace in which the tape is blank. Let $S$ be the state in which this occurs. As the activity of $M$ is finite and the productivity is at least 1, this state cannot be either $z$ or $a$. 
Let $M'$ be the machine found by swapping all occurrences of $a$ and $S$ in $M$. The result then trivially follows. 
\qed
\end{proof}

A similar property will also hold for machines of activity $\infty$.

\begin{figure}[tb]
\centering
\includegraphics[width=0.8\textwidth]{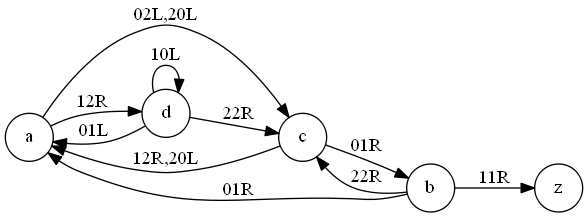}
\caption{Dragon 91 transformation 1}
\label{fig:m91refine1}
\end{figure}
\begin{figure}[tb]
\centering
\includegraphics[width=0.8\textwidth]{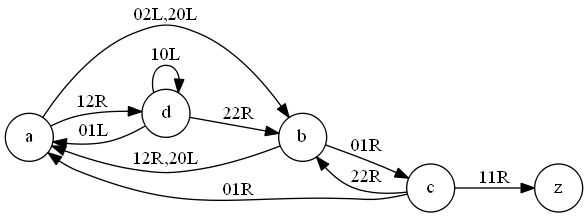}
\caption{Dragon 91 transformation 2}
\label{fig:m91refine2}
\end{figure}

For an example of the application of Lemma~\ref{lemma:a03a}, consider again Dragon~91 in Figure~\ref{fig:m91raw}, which has activity around $10^{14,072}$ and productivity around $10^{7,036}$. 
In the execution of this machine, after 5 steps the machine is in state $b$ and the tape is blank. Swapping $a$ and $b$ gives us the machine in Figure~\ref{fig:m91refine1}. 
We then swap $c$ and $b$ to get the machine in Figure~\ref{fig:m91refine2}.
This machine needs further processing, however, due to the transition $(a,0,2,l,b)$. We can apply Lemma~\ref{lemma:symbolname} to a transition $(a,0,O,\_,S)$ where $O \not \in \{0,1\}$, to ensure that the first transtion must be of the form $(a,0,1,D,b)$.
The choice of $D$, left or right, is entirely arbitrary; as long as this choice is applied consistently, it will have no bearing on the results. We choose $D$ to be $r$, which is consistent with many of the machine definitions on Marxen's website.\footnote{This choice means that our machines are \textit{dextrous}, or right-handed. For every such machine there is a \textit{sinister sibling}, which has exactly the same execution behaviour as the \textit{orthodox original} except that the direction of each transition is reversed.}

We can then transform the machine in Figure~\ref{fig:m91refine2} to one which has initial transition $(a,0,1,r,b)$ by applying both the transformation of Lemma~\ref{lemma:symbolname} and swapping $l$ for $r$ everywhere to get the machine in Figure~\ref{fig:m91refine3}.

\begin{figure}[tb]
\centering
\includegraphics[width=0.7\textwidth]{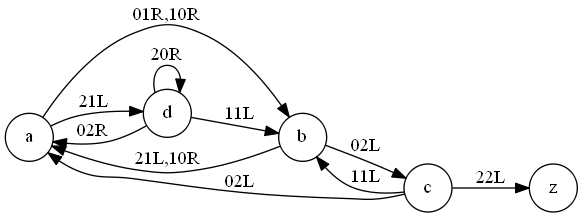}
\caption{Dragon 91 transformation 3}
\label{fig:m91refine3}
\end{figure}
To be strict, we could also insist that the halting transition be $(c,2,1,r,z)$ rather than $(c,2,2,l,z)$ but this is more of a notational convenience than anything else as this changes neither activity nor productivity. In practice, this means that we can restrict our attention to a specific type of halting transition such as $(\_,\_,1,r,z)$ in order to reduce the number of machines that need to be generated, but no essential properties are lost if this constraint is not met, provided that the machine is maximising. Further discussion on this and related issues can be found in Section~\ref{sec:results}. 

This means that the initial transition in all of the machines we consider is $(a,0,1,r,b)$. Given this constraint, we can also deduce some constraints on the second transition used. These are given in the lemma below, whose proof is trivial. 

\begin{lem}
\label{lemma:b01}
Let $M$ be a Turing machine containing tuples of the form $(a,0,\_,r,b)$ and $(b,0,\_,r,S)$ where $S \in \{a,b\}$. Then $M$ has activity $\infty$. 
\end{lem}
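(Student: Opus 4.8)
The plan is to trace the computation of $M$ on the blank input through its first two steps and observe that from that point on the tape head simply marches to the right forever. Because $\delta$ is a (partial) function, there is exactly one transition with left-hand side $(a,0)$ and exactly one with left-hand side $(b,0)$, so the tuples $(a,0,O_1,r,b)$ and $(b,0,O_2,r,S)$ assumed in the statement (for some outputs $O_1,O_2$ and some $S\in\{a,b\}$) are precisely the ones that get used. Writing a trailing $0$ for the scanned blank cell together with the all-blank tape to its right, the first step is forced: $\{a\}0 \Rightarrow O_1\{b\}0$, so after step $1$ the machine is in state $b$, scanning a blank, with only blanks to the right. The second step is then also forced: $O_1\{b\}0 \Rightarrow O_1O_2\{S\}0$, and again the head scans a blank with only blanks to its right. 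The crucial observation is that each forced step moves the head one cell to the right onto a cell not previously visited, so the finite block $O_1O_2$ (whatever its contents) lies permanently to the left of the head and can never again influence the computation; equivalently, the pair consisting of the current state and the contents of the tape under and to the right of the head has recurred.

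A short two-case induction then completes the argument. If $S=b$, the configuration $O_1O_2\{b\}0$ reached after step $2$ exhibits the same (state, scanned symbol, all-blank suffix) data as the configuration $O_1\{b\}0$ reached after step $1$; hence, inductively, after every step $n\ge 1$ the configuration is $O_1O_2^{\,n-1}\{b\}0$, the transition $(b,0,O_2,r,b)$ applies once more, and the halting state $z$ is never entered. If $S=a$, the configuration $O_1O_2\{a\}0$ exhibits the same data as the initial configuration $\{a\}0$, so the computation cycles through the two forced transitions indefinitely, with configuration $(O_1O_2)^k\{a\}0$ after $2k$ steps and $(O_1O_2)^kO_1\{b\}0$ after $2k+1$ steps, and again $z$ is never reached. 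In either case the computation on the blank input does not terminate, so $activity(M)=\infty$.

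There is no genuine obstacle here — consistent with the paper's remark that the proof is trivial. The only point requiring a little care is the bookkeeping with the configuration notation: one must note explicitly that each of the two forced steps moves the head rightward onto a fresh blank cell, so that the portion of the tape from the head rightwards stays all-blank and the relevant (state, head-symbol) situation genuinely recurs. Everything else is immediate from determinism and the induction above, and the argument is unaffected by whether $O_1$ or $O_2$ happens to be the blank symbol.
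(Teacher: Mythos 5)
Your proof is correct: the paper omits the argument entirely (it labels the lemma's proof ``trivial''), and your rightward-march induction, distinguishing the cases $S=b$ and $S=a$ and noting that the head always lands on a fresh blank so the (state, scanned symbol) pair recurs, is exactly the intended argument. No gaps.
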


This means that we can restrict our attention to transitions of the form $(b,0,\_,l,\_)$ or $(b,0,\_,r,c)$. Furthermore, it is clear that a machine with transitions $(a,0,\_,r,b)$ and $(b,0,\_,\_,z)$ has activity 2, and is hence irrelevant. This observation, together 
with Lemmas \ref{lemma:a01}, \ref{lemma:a02}, \ref{lemma:a03},  \ref{lemma:a03a}, \ref{lemma:statename}, \ref{lemma:symbolname} and \ref{lemma:b01} means that we need only consider machines with a particular initial transition and some constraints on the second one. 

We next show that a particular class of machines (similar to the gutless goanna) is irrelevant to the busy beaver problem. 

\begin{definition}
An $n$-state Turing machine is {\bf $0$-dextrous} if there are $n$ transitions of the form $(\_, 0, \_, \_, \_)$ and all $n$ such transitions are of the form $(\_, 0, \_, r, \_)$.
\end{definition}

Note that a gutless goanna machine is $0$-dextrous. The reason that we identify the $0$-dextrous class of machines is that they are all irrelevant. 

\begin{lem}\label{lemma:0d}
Let $M$ be a $0$-dextrous Turing machine. Then $M$ is irrelevant to the busy beaver problem. 
\end{lem}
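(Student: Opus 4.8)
The plan is to show that a $0$-dextrous machine $M$ must satisfy at least one of the four disqualifying conditions in Definition~\ref{def:relevant}. The key observation is that in a $0$-dextrous machine, every transition taken on a blank cell moves the head to the right, which forces the execution on the blank input to march steadily rightwards across fresh (blank) tape. First I would trace the execution of $M$ on the all-blank tape: since the head starts on a blank cell, the first transition used is $(a,0,\_,r,\_)$, so after one step the head is one cell to the right, again on a blank cell (it has never been visited before). Inductively, at every step the head is sitting on a cell that has never been written, so the transition applied is always one of the $n$ transitions of the form $(\_,0,\_,r,\_)$, the head always moves right, and the machine never revisits a cell.

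From this I would extract the needed conclusion by a case split on whether the halting transition $(\_,0,\_,r,z)$ is among the $n$ blank-cell transitions. If none of the blank-cell transitions is a halting transition, then the execution never halts (it keeps moving right through blanks forever), so $activity(M) = \infty$ and $M$ is irrelevant. If exactly one of them, say the transition for state $s$, is the halting transition $(s,0,\_,r,z)$, then the computation halts the first time state $s$ is entered on a blank cell; since there are only $n$ distinct states and no state/blank-cell configuration is ever repeated (each is on a distinct, previously unvisited cell), this must happen within $n$ steps, so $activity(M) \leq n$ and again $M$ is irrelevant. (The case of two or more halting transitions among the blank-cell transitions is subsumed, since only the first one reached matters and it is still reached within $n$ steps.) Either way $M$ falls under one of the first two bullets of Definition~\ref{def:relevant}, so $M$ is irrelevant to the busy beaver problem. \qed

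The only mildly delicate point — and the step I would be most careful about — is the claim that the head never revisits a cell, since \emph{a priori} a machine could in principle have the head move right, then some transition on a \emph{non}-blank cell move it left again. Here the structure of the argument saves us: because we are executing on the all-blank input and every cell the head has ever \emph{occupied} was blank when first visited and is left via a right-move, the head only ever sees blank cells, so the only transitions that are ever exercised are the $0$-labelled ones, all of which move right; there is simply no opportunity for a leftward move to occur. Making this induction explicit (the invariant being ``after $k$ steps the head is $k$ cells right of the start, on a never-before-visited blank cell, and the machine is in one of its $n$ states'') is the crux, and once it is stated the bound $activity(M)\le n$ in the halting case is immediate from the pigeonhole principle on the $n$ states.
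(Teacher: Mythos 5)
Your proof is correct and follows essentially the same route as the paper: split on whether the execution halts (activity $\infty$ in the non-halting case, activity $\leq n$ otherwise). The paper's version is terser — it simply asserts that a terminating $0$-dextrous machine halts within $n$ steps — whereas you make explicit the invariant (the head always lands on a fresh blank cell) and the pigeonhole argument on the $n$ states that justify that bound.
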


\begin{proof}
Let the number of states in $M$ be $n$. If $M$ does not terminate on the blank input, then its activity is $\infty$ and $M$ is hence irrelevant. Otherwise, $M$ terminates on the blank input, which, as $M$ is $0$-dextrous, is only possible if there is a transition in $M$ of the form $(\_,0,\_,r,z)$. This means that $M$ halts in at most $n$ steps, and so $activity(M) \leq n$, which means that $M$ is irrelevant. 
\qed
\end{proof}

This result shows that we can safely ignore any $0$-dextrous machines generated. As this property requires that at least all of the transitions for input $0$ are known, this can generally only be implemented as a constraint on the final machine, i.e.\ that any $0$-dextrous machine that is generated is ignored. 

These results allow us to define an appropriate normal form for machines. Such a definition is given below.

\begin{definition}\label{def:normal}
A Turing machine $M$ is {\bf normal} iff it has all of the following properties:

\begin{itemize}
\item $M$ contains the tuple $(a,0,1,r,b)$
\item $M$ contains either a tuple of the form $(b,0,O,l,S)$ where $S \in \{a,b,c\}$ and $O \in \{0,1,2\}$, or a tuple of the form $(b,0,O,r,c)$ where $O \in \{0,1,2\}$. 
\item $M$ is not $0$-dextrous
\item When executing $M$ on the blank input, 
	\begin{itemize}
		\item states are encountered in alphabetical order
		\item symbols are encountered in numerical order
		\item the blank tape condition is satisfied
	\end{itemize}

\end{itemize}

\end{definition}

The first and second conditions, and the first two parts of the fourth condition, follow from the sequence of results in this section. The third condition follows directly from Lemma~\ref{lemma:0d}. The third part of the fourth condition follows from Lemmas~\ref{lemma:a03} and \ref{lemma:a03a}. 

The constraints on the order of states and symbols exploit Lemmas~\ref{lemma:statename} and~\ref{lemma:symbolname} to eliminate certain types of redundancy. Due to the $(a,0,1,r,b)$ transition, we know that the states $a$ and $b$ and the symbols $0$ and $1$ will be present in every normal machine. This ensures that the third state encountered during computation is $c$, the fourth is $d$ and so on. Clearly there is a productivity and activity equivalent machine in which states $c$ and $d$ are swapped, but in this latter machine, the third state encountered during execution would be $d$, which means this machine is not normal. Similar remarks apply to the symbols, so that in a normal machine, the third symbol encountered will be $2$, the fourth $3$ and so on. One way in which this property becomes important is when comparing machines generated by our process with existing machines. For example, there are some published dreadful dragons which include a transition $(b,0,3,l,a)$. As such a machine is not in normal form, we need to first transform it, which in this case involves swapping the symbols $2$ and $3$, and possibly doing further transformations. More discussion on this point can be found in Section~\ref{sec:results}. 

We are now in a position to show the main result of this section, which relates normal machines to those relevant to the busy beaver problem. 

\begin{prop} \label{prop:relevant}
Let $M$ be an $n$-state Turing machine. If $M$ is relevant to the busy beaver problem, then $M$ is productivity equivalent to a normal machine. 
\end{prop}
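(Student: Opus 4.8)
The plan is to take an arbitrary $n$-state Turing machine $M$ that is relevant to the busy beaver problem and produce a productivity-equivalent normal machine by applying the transformation lemmas of this section in sequence, each one establishing one of the defining clauses of Definition~\ref{def:normal} while preserving productivity (and not destroying the clauses already secured). Since $M$ is relevant, by Definition~\ref{def:relevant} we know $activity(M)$ is finite, $activity(M) > n$, $productivity(M) > 0$, and $M$ satisfies the blank tape condition; these hypotheses are exactly what the lemmas require.

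First I would handle the very first execution step. By Lemmas~\ref{lemma:a01} and~\ref{lemma:a01a}, the first transition cannot be of the form $(a,0,\_,\_,a)$ or $(a,0,\_,\_,z)$ (the former gives activity $\infty$, the latter activity $1$, both contradicting relevance), so it has the form $(a,0,O,D,S)$ with $S \notin \{a,z\}$. Applying Lemma~\ref{lemma:a02} I may assume $S = b$. If $O = 0$, Lemma~\ref{lemma:a03} (whose hypotheses — finite activity, productivity $\geq 1$ — hold) gives a productivity-equivalent machine whose first transition writes a non-blank symbol; then Lemma~\ref{lemma:symbolname} lets me rename that symbol to $1$, and swapping $l$ and $r$ throughout (the ``dextrous'' convention, which changes neither activity nor productivity) makes $D = r$. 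This establishes the first clause, $(a,0,1,r,b) \in M$. Then Lemma~\ref{lemma:b01} rules out $(b,0,\_,r,a)$ and $(b,0,\_,r,b)$, and the activity-$2$ observation rules out $(b,0,\_,\_,z)$, leaving exactly the two allowed shapes for the $(b,0,\dots)$ transition; combined with Lemma~\ref{lemma:symbolname} to bound the output of that transition to $\{0,1,2\}$, this gives the second clause.

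Next I would secure the fourth clause. The ordering conditions on states and symbols follow by repeatedly applying Lemmas~\ref{lemma:statename} and~\ref{lemma:symbolname}: walk through the (finite) execution trace of $M$ on the blank input, and the first time a state other than $a,b,\dots$ (in order) is encountered out of turn, swap it with the letter it should have been; likewise for symbols beyond $1$; since these swaps preserve productivity, activity and the counts of states/symbols/halting transitions, and since the execution trace is finite, finitely many such swaps yield a machine encountering states alphabetically and symbols numerically. The blank tape condition is already part of relevance, but I must check it survives the earlier transformations; if at some stage it is violated, Lemma~\ref{lemma:a03a} restores it while preserving productivity. Finally, for the third clause, if the resulting machine were $0$-dextrous then by Lemma~\ref{lemma:0d} it would be irrelevant; but productivity equivalence plus the preserved state/symbol counts, together with the fact that the earlier lemmas also preserve activity up to the harmless initial-blank-step reductions, would force the original $M$ to be irrelevant too — a contradiction — so the machine is not $0$-dextrous.

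The main obstacle is bookkeeping about the order of application and mutual non-interference of the lemmas: I must argue that applying a later transformation (say the state-renaming of Lemma~\ref{lemma:statename}, or the blank-tape fix of Lemma~\ref{lemma:a03a}) does not undo an earlier-established clause such as $(a,0,1,r,b) \in M$ or the $(b,0,\dots)$ constraint. This is true because the renaming lemmas only swap states/symbols \emph{other than} $a$, $z$, $0$ (and the symbol-swap for the $(b,0,\dots)$ transition stays within $\{1,2,\dots\}$), so the distinguished transition $(a,0,1,r,b)$ is untouched, and the $(a,0,\dots)$-fix of Lemma~\ref{lemma:a03a} only triggers when the blank tape condition is violated, which can be arranged to be checked and repaired \emph{before} fixing the first transition. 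So the argument should be organized as: (1) fix the blank-tape condition and the $O=0$ case via Lemmas~\ref{lemma:a03},~\ref{lemma:a03a}; (2) normalise the first and second transitions via Lemmas~\ref{lemma:a01}--\ref{lemma:a02},~\ref{lemma:symbolname},~\ref{lemma:b01} and the direction convention; (3) normalise the state/symbol encounter order via Lemmas~\ref{lemma:statename},~\ref{lemma:symbolname}; (4) observe $0$-dextrousness cannot arise without contradicting relevance. The termination of step (3) (finitely many swaps) and the care in step (2) that symbol renaming respects the already-fixed transitions are the only genuinely delicate points.
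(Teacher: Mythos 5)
Your proposal is correct and follows essentially the same route as the paper's proof: it establishes each clause of Definition~\ref{def:normal} by combining Lemmas~\ref{lemma:a01}--\ref{lemma:b01} with the renaming Lemmas~\ref{lemma:statename} and~\ref{lemma:symbolname} and the relevance hypotheses. You are in fact more careful than the paper about the order of application and the non-interference of the transformations (and about checking the $0$-dextrous condition on the \emph{transformed} machine rather than on $M$ itself), but the underlying argument is the same.
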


\begin{proof}
Let $M$ be a Turing machine which is relevant to the busy beaver problem. 

Then $M$ has finite activity, $activity(M) > n$, $productivity(M) > 0$ and satisfies the blank tape condition. 

Consider the four conditions of Definition~\ref{def:normal}. 

By the combination of Lemmas~\ref{lemma:a01}, \ref{lemma:a01a}, \ref{lemma:a02}, \ref{lemma:statename}, \ref{lemma:symbolname} and~\ref{lemma:a03}, $M$ is productivity equivalent to a machine which satisfies the first condition. 

By the combination of Lemmas~\ref{lemma:statename}, \ref{lemma:symbolname} and~\ref{lemma:b01}, $M$ is productivity equivalent to a machine which satisfies the second condition.  

As $M$ has finite activity, by Lemma~\ref{lemma:0d} it cannot be $0$-dextrous. 

By Lemmas~\ref{lemma:statename} and~\ref{lemma:symbolname}, $M$ is productivity equivalent to a machine which satisfies the first two parts of the fourth condition. 
The third part of the fourth condition follows immediately from the definition of relevance. 
\qed
\end{proof}

This means that in order to solve the busy beaver problem, it is sufficient to consider only normal machines. 
In other words, we can be certain that any machines excluded from consideration are guaranteed to be irrelevant. Note also that the case $O = 2$ in the second condition is only applicable when the number of symbols in the machine is at least 3 and the number of states is at least 3. 

The results of this section allow us to define an appropriate normal form for machines, which means that there are certain machines that need not be generated at all, thus reducing the work we need to do, and others which once generated can be immediately dismissed as irrelevant, thus reducing the number of machines to be stored. This means we are almost in a position to define a procedure to generate machines, but there is one final aspect to consider before we do. 

\section{Monotonicity and Machine Generation}
\label{sec:monotonicity}

As noted above, when generating machines for the busy beaver problem, it seems intuitively natural to concentrate on machines that are \textit{1-halting, exhaustive} and \textit{maximising}. We can ensure that a machine is maximising by specifying that any halting transition be of the form $(\_, \_, 1, r, z)$. 
It is more difficult to ensure that a machine is exhaustive and 1-halting. In fact, we cannot guarantee that a generated machine satisfies either of these properties. 

\begin{figure}
\centering
\includegraphics[width=0.6\textwidth]{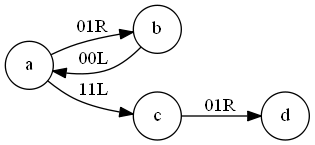}
\caption{Partially generated machine 1}
\label{fig:partial1}
\end{figure}

For example, consider the partial machine in Figure~\ref{fig:partial1} which is generated as part of the process of generating $5$-state $2$-symbol machines.
Execution of this partial machine brings us to the configuration $1\{d\}10$. So what are our choices for the $(d,1,O_1,D_1,N_1)$ transition? The set of states used in the machine so far is $\{a,b,c,d\}$ and the set of symbols used so far is $\{0,1\}$. This means that we have used all the symbols necessary for a 2-symbol machine, but have used only 4 states. Hence we should consider only the possibilities $O_1 \in \{0,1\}$, $D_1 \in \{l,r\}$ and $N_1 \in \{a,b,c,d,e\}$. Note that $z$ is excluded from the possible states for this transition, as we are generating a 5-state machine and have only used $\{a,b,c,d\}$ so far, and so if we allow $z$ as a possible resulting state from this transition, we will have generated a terminating 4-state 2-symbol machine, and not a 5-state 2-symbol one. Let us assume that we choose $O_1 = 1, D_1 = r$ and $N_1 = e$. This gives us the machine in Figure~\ref{fig:partial2} in the configuration $11\{e\}0$.

\begin{figure}
\centering
\includegraphics[width=0.8\textwidth]{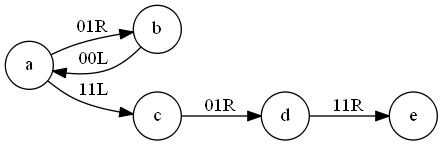}
\caption{Partially generated machine 2}
\label{fig:partial2}
\end{figure}

Our next decision is for the transition $(e,0,O_2, D_2, N_2)$. Now as we have used 5 states and 2 symbols in the definition so far, one possibility is to choose this transition to be the halting transition, in which case we have $O_2 = 1, D_2 = r$ and $N_2 = z$. Alternatively we may choose $O_2 \in \{0,1\}, D_2 \in \{l,r\}$ and $N_2 \in \{a,b,c,d,e\}$ and continue the process. It may seem counterproductive to choose the halting transition at this point, as we have a number of alternatives to it. 
\textbf{The problem is that we have no way to guarantee that any of these alternatives will result in a terminating machine.} What we do know is that the choice of the halting transition at this point \textit{will} result in a terminating machine. Hence, despite the apparent redundancy, it seems the only safe course at this point is to output the machine in Figure~\ref{fig:partial3} as one possibility, and continue to search for more. This means that we can generally only guarantee that the generation of an $n$-state $m$-symbol machine will result in a machine that is  \textbf{$n$-state full and $m$-symbol full}, rather than one which is $n$-$m$-exhaustive. 

\begin{figure}
\centering
\includegraphics[width=0.8\textwidth]{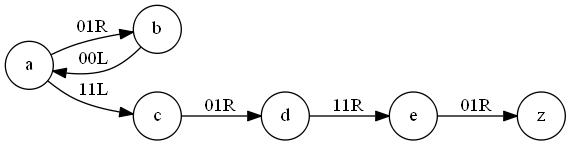}
\caption{Partially generated machine 3}
\label{fig:partial3}
\end{figure}

The reason that we can always ensure that the generation of an $n$-state $m$-symbol machine results in a machine which is $n$-state full and $m$-symbol full is \textit{the strict monotonicity of the busy beaver function.} In other words, if we know that  that $bb(n_1, m) > bb(n_2, m)$ whenever $n_1 > n_2 \geq 2$, then this justifies the above step in which the halting transition was not considered when searching for the $(d,1,O_1,D_1,N_1)$ transition for the machine in Figure~\ref{fig:partial2}. Specifically, knowing that $bb(5,2) > bb(4,2)$ means that we are justified in not considering the halting transition until there are at least 5 states in the machine generated. In general, this means that when generating an $n$-state $m$-symbol machine, we do not allow the halting transition to be added until we have all $n$ states and all $m$ symbols present in the machine. As noted above, the transition $(a,0,1,r,b)$, which occurs in all machines, guarantees the occurrence of at least 2 states and at least 2 symbols. 

It is intuitively obvious that the busy beaver function is monotonic in both the number of states and the number of symbols, i.e.\ that  that $bb(n_1, m) \geq bb(n_2, m)$ whenever $n_1 > n_2 \geq 2$ and that $bb(n, m_1) \geq bb(n, m_2)$ whenever $m_1 > m_2 \geq 2$, as any machine with at most $n$ states (respectively $m$ symbols) clearly has at most $n+1$ states (respectively $m+1$ symbols). 

It is not difficult to show that the $bb$ function (and $\mathit{ff}$ for that matter) is strictly monotonic in the number of states. This is done in the Proposition below, which is a straightforward generalisation of Example 4.5 in \cite{BBJ}. 

\begin{prop}\label{prop:states}
Let $M$ be a $k$-halting $n$-state $m$-symbol Turing machine with finite activity. Then there is a $k$-halting $(n+1)$-state $m$-symbol Turing machine $M'$ with finite activity such that $activity(M') \geq activity(M) + 1$ and $productivity(M') = productivity(M) + 1$. Furthermore, if $M$ is $n$-$m$-exhaustive, then $M'$ is $(n+1)$-$m$-exhaustive. 
\end{prop}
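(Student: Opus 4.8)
The plan is to construct $M'$ from $M$ by adding one new state, which I will call $t$, that acts immediately before the halting transition of $M$ to write one extra non-blank symbol and then halt. Concretely, let $z$ be the halting state of $M$, let $M$ be $n$-state with state set $Q = \{a, \dots\}$, and pick a fresh state $t \notin Q \cup \{z\}$. Every transition of the form $(S, I, O, D, z)$ in $M$ (there are $k$ of them, since $M$ is $k$-halting) is redirected to $t$ instead of $z$, i.e.\ replaced by $(S, I, O, D, t)$. I then add $m$ new transitions out of $t$, one for each symbol $\gamma \in \Gamma$: the transition $(t, \gamma, 1, r, z)$ if $\gamma = 0$, and $(t, \gamma, \gamma, r, z)$ if $\gamma \neq 0$ — in all cases the new state $t$ moves right, halts, and leaves a non-blank symbol under the head (writing a fresh $1$ if the cell was blank, otherwise preserving the existing non-blank symbol). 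This $M'$ is $(n+1)$-state, $m$-symbol, and still $k$-halting (the $k$ transitions into $z$ are exactly the $k$ transitions out of $t$ into $z$).

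Next I would trace the computation of $M'$ on the blank input and compare it with that of $M$. Since the redirected transitions have the same $(S, I, O, D)$ components as the original halting transitions, $M'$ behaves identically to $M$ up to the moment $M$ would have halted: at that point $M$ executes some $(S, I, O, D, z)$ and stops after, say, $k_0 = activity(M)$ steps with $productivity(M)$ non-blank symbols on the tape. $M'$ instead executes $(S, I, O, D, t)$, arriving in state $t$ with the same tape contents and head position, then executes exactly one more step via the appropriate $(t, \gamma, \cdot, r, z)$ transition and halts. That extra step writes a non-blank symbol in a cell, which — because the head position in configuration after $M$'s halting move is wherever $M$ left it — I need to check is a cell that was blank in $M$'s final tape, or at worst one that already held a non-blank symbol; either way the transition design ensures the cell ends up non-blank and no previously non-blank cell is overwritten with a blank. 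Hence $activity(M') = activity(M) + 1 \geq activity(M) + 1$ and $productivity(M') = productivity(M) + 1$, and $M'$ has finite activity.

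Finally, for the exhaustiveness claim: if $M$ is $n$-$m$-exhaustive then $\delta_M$ is total on $Q \times \Gamma$, so all $nm$ combinations are defined; in $M'$ these remain defined (only the target state of the $k$ halting ones changed), and I have added transitions for all $m$ combinations $\{t\} \times \Gamma$, giving $(n+1)m$ defined transitions, i.e.\ $M'$ is $(n+1)$-$m$-exhaustive. The main obstacle — really the only place needing care rather than routine bookkeeping — is the tape-cell argument in the middle paragraph: I must be precise about where the head sits after $M$'s final (halting) move and argue that the single extra write by state $t$ genuinely increases the non-blank count by exactly one rather than, say, overwriting a symbol or landing on an already-counted cell. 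Choosing the move direction of the $t$-transitions to be $r$ (consistent with the paper's convention that halting transitions may be taken as $(\_,\_,1,r,z)$) and noting that a halting transition's own output symbol is already placed before the head moves means the head in the pre-halt configuration points at a cell whose contents are governed separately; writing over a blank there adds one, and writing over a non-blank there keeps the count fixed, which would only give $productivity(M') = productivity(M)$ — so I should instead design $t$ to move the head one step into guaranteed-blank territory first, or more simply observe that on the blank input the relevant cell can be assumed blank; I would nail this down by having $t$ first move (writing the current symbol unchanged) into the cell the halting transition was about to vacate, which on the blank-input computation is provably blank, then this is cleanly handled. I expect this to cost a few lines, after which the stated inequalities follow immediately.
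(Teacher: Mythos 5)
Your overall strategy --- interpose a fresh state between $M$'s halting transition(s) and the halt state $z$, and have that state perform one last productive write --- is the same as the paper's, but the mechanism you give for that write does not work, and you half-notice this without repairing it. When $M'$ enters the new state $t$, the head sits on whatever cell the halting move of $M$ lands on, and there is no reason for that cell to be blank: it may hold any symbol written earlier in the computation. Your transition $(t,\gamma,\gamma,r,z)$ for $\gamma\neq 0$ then halts without adding a non-blank symbol, so in that case $productivity(M')=productivity(M)$, not $productivity(M)+1$. The patches you sketch at the end do not close the gap: there is no ``guaranteed-blank territory'' one step away in either direction (the tape near the head can be densely non-blank), and the cell the halting transition ``was about to vacate'' contains that transition's output symbol $O$, which need not be blank either. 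The fix the paper uses is to make the new state \emph{scan} rather than halt immediately: give it $(s,\gamma,\gamma,r,s)$ for every $\gamma\neq 0$ and a single $(s,0,1,r,z)$. Since only finitely many cells are non-blank, the scan must reach a blank cell, turn exactly one $0$ into a $1$, and halt; this yields $productivity(M')=productivity(M)+1$ exactly and $activity(M')\geq activity(M)+1$.

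A second, smaller error: your $M'$ is not $k$-halting. You redirect all $k$ halting transitions of $M$ into $t$ and then give $t$ a transition into $z$ for \emph{every} symbol, so $M'$ has $m$ halting transitions, not $k$. In the paper's construction only the single transition $(s,0,1,r,z)$ enters $z$ from the new state (the others loop back to $s$), and only one halting transition of $M$ is replaced, so the count $k$ is preserved. Your exhaustiveness count is fine once the construction is corrected.
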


\begin{proof}
As $M$ terminates on the blank input, there must be a halting transition of the form $(S, I, O, D, z)$ in $M$. Let $s$ be a state which does not occur in $M$. Consider the machine $M'$ which is obtained from $M$ by replacing the transition $(S, I, O, D, z)$ with $m$ transitions as below. A diagrammatic representation of this transformation is in Figure~\ref{fig:transform1}.

\begin{center}
\begin{tabular}{ccccc}
\textbf{State} & \textbf{Input} & \textbf{Output} & \textbf{Direction} & \textbf{New State} \\
S & I & O & D & s \\
s & 0 & 1 & r & z \\
s & 1 & 1 & r & s \\
\ldots & & & & \\
s & m-1 & m-1 & r & s \\
\end{tabular}
\end{center}
 
Note that once $M'$ enters state $s$, it will skip to the right until it comes across a $0$, and then halt.
So the execution of $M'$ on the blank input will behave exactly as $M$ until the halting transition of $M$ occurs, at which point $M'$ will change to the new state $s$ and will execute at least one more step than $M$ before terminating. Note also that when $M'$ enters state $s$, the tape will be in exactly the same configuration as when $M$ terminates, apart from being in state $s$ rather than state $z$. This means that when $M'$ terminates, it will change a 0 into a 1, and hence have productivity one more than that of $M$. 

For the exhaustiveness property, note that one transition in $M$ is replaced with $m+1$ transitions in $M'$, and so if $M$ contains $n \times m$ transitions, then $M'$ contains $(n \times m) - 1 + (m+1) = (n+1) \times m$ transitions. 
\qed
\end{proof}

An example of this transformation is given in Figure~\ref{fig:exampletransform}.

\begin{figure}[tb]
\centering
\begin{tabular}{cc}
\includegraphics[width=0.3\textwidth]{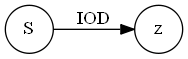} & \includegraphics[width=0.55\textwidth]{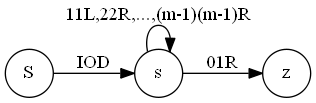} \\
Original machine & Transformed machine \\
\end{tabular}
\caption{Machine transformation}
\label{fig:transform1}
\end{figure}

\begin{figure}[tb]
\centering
\begin{tabular}{cc}
\includegraphics[width=0.4\textwidth]{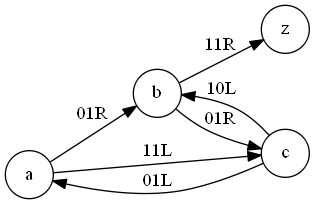} & \includegraphics[width=0.5\textwidth]{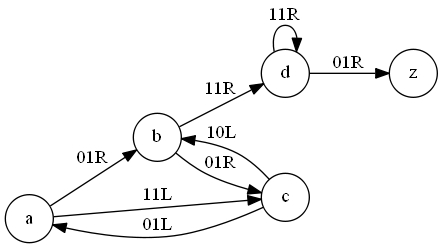} \\
Original machine & Transformed machine \\
\end{tabular}
\caption{Example transformation}
\label{fig:exampletransform}
\end{figure}

This transformation is straightforward, and this method of increasing machine size is highly unlikely to be of any practical use in finding busy beaver values. However, it establishes the strict monotonicity of the busy beaver function in the number of states, which despite being virtually the weakest possible statement of strict monotonicity, is sufficient to ensure that the above procedure for adding the halting transition is sound. 

It seems intuitively clear that a similar result should hold for the number of symbols, i.e.\ that $bb(n, m_1) > bb(n, m_2)$ whenever $m_1 > m_2 \geq 2$, and so that when generating an $m$-symbol machine, we require that at least $m$ symbols be present in the machine before we add a halting transition. As noted above, the presence of the transition $(a,0,1,r,b)$ in every machine generated ensures that every generated machine contains at least 2 symbols. 

A formal statement  of this desired result is given below. 

\begin{conj}\label{conj:symbols} 
Let $M$ be a $k$-halting Turing machine with $n$ states and $m$ symbols for some $k \geq 1$ with finite activity. Then there is a $k$-halting $n$-state $(m+1)$-symbol Turing machine $M'$
with finite activity such that $activity(M') > activity(M)$ and $productivity(M') > productivity(M)$.
\end{conj}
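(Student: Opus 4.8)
The natural line of attack is to imitate the proof of Proposition~\ref{prop:states}: there a fresh state is spliced in immediately after the halting transition to perform a single ``write one more $1$ and halt'' clean-up step. Here there is no fresh state available, but there is a dual resource: since the symbol $m$ does not occur in $M$, every transition of the form $(q,m,\_,\_,\_)$ is currently undefined, and hence free to be set arbitrarily. The plan is therefore: (i) take the halting transition $(S,I,O,D,z)$ of $M$ that is actually exercised on the blank input (there is exactly one, since $M$ has finite activity); (ii) redefine it so that, instead of halting, it writes the new symbol $m$ at the current cell and moves the head; and (iii) use the free ``$m$-column'' transitions to recognise when the head next revisits that marked cell and, at that point, write a $1$ and halt. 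If this can be made to work, $M'$ runs at least one extra step and finishes with at least one more non-blank cell than $M$, so $activity(M') > activity(M)$ and $productivity(M') > productivity(M)$; one then checks, exactly as in Proposition~\ref{prop:states}, that $M'$ still has $n$ states, now $m+1$ symbols, and the same number $k$ of halting transitions.

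One sub-case of this plan is immediate. If $M$ has any transition (or any state/symbol pair) that is never used in its computation on the blank input --- in particular if $M$ is not exhaustive --- then that slot can be repurposed directly to attach the clean-up gadget without disturbing the computation of $M$ at all, and the argument goes through. So one would dispose of that case first and reduce to the situation in which $M$ is exhaustive and each of its $n \times m$ transitions is used on the blank input. It is worth noting that this is precisely the situation for machines produced by the \textit{tnf} process, and for the known dreadful dragons, so the residual case is exactly the interesting one.

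That residual case is the main obstacle, and is presumably why the statement is offered as a conjecture. When $M$ is exhaustive with all transitions used, the only transitions one may safely alter without perturbing the computation of $M$ are the halting transition itself (whose effect is confined to the very last step) and the free $m$-column transitions. But once the halting transition is redefined to write $m$, the head in the next configuration is reading an \emph{old} symbol, and the transition on that old symbol is locked to whatever $M$ already does; the machine then runs ``off script'' through $M$'s table, and, with the unique halting transition now gone, there is no guarantee that it ever returns to the single cell marked with $m$ --- it may simply run to infinity, so that $activity(M') = \infty$ and the construction collapses. Escaping this seems to require either a preprocessing step that rewrites $M$ as an equivalent machine with a genuinely free transition on which to ``prime'' the new symbol (which looks essentially as hard as the conjecture itself), or a non-surgical construction exploiting the global structure of the computation of $M$ --- for example its last non-blank write, or a Michel-style description of its halting configuration --- to decide where the marker and the clean-up transitions should go. We note finally that a proof restricted to the activity- and productivity-champion machines would already suffice to justify the halting-transition policy of Section~\ref{sec:generation} for a varying number of symbols, just as Proposition~\ref{prop:states} does for a varying number of states.
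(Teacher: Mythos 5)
You have not proved the statement, and you should be aware that the paper does not prove it either: it is stated as Conjecture~\ref{conj:symbols} precisely because the author could not find a transformation analogous to the one in Proposition~\ref{prop:states}, and the obstruction the paper describes (a terminating $m$-symbol computation does not obviously extend to a terminating $(m+1)$-symbol computation of strictly greater activity and productivity) is essentially the one you diagnose. So your write-up is an accurate account of why the natural ``splice in the fresh symbol at the halting transition'' surgery fails, but it leaves the same case open that the paper leaves open, and it cannot be accepted as a proof.

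To be concrete about where your construction breaks: in the residual case (every transition of $M$ exercised on the blank input), once you replace the halting transition $(S,I,O,D,z)$ by one that writes $m$ and moves, the machine re-enters $M$'s fully committed transition table reading an old symbol, and nothing forces the head ever to return to the unique cell marked $m$; if it does not, $activity(M')=\infty$ and both inequalities fail. Even in the favourable event that the head does return, your claim that $M'$ ``finishes with at least one more non-blank cell than $M$'' is not justified: the off-script excursion may erase non-blanks that $M$ left on the tape, and overwriting the marked cell with $1$ only helps if $O$ was blank, so $productivity(M')>productivity(M)$ does not follow. Your preliminary sub-case is also shakier than you suggest: an unused state/symbol pair $(q,s)$ is only ``free'' if you can route the post-halting computation into state $q$ \emph{while reading} $s$; merely redirecting the halting transition into $q$ leaves the head on whatever symbol happens to be under it, whose $q$-row entry may already be committed. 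The one genuinely correct observation beyond the paper's own discussion is your closing remark that a version of the conjecture restricted to champion machines would suffice to justify the halting-transition policy of Section~\ref{sec:generation}; that weaker statement is also not established, but it is a sensible target.
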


\begin{figure}[tb]
\centering
\begin{tabular}{cc}
\includegraphics[width=0.5\textwidth]{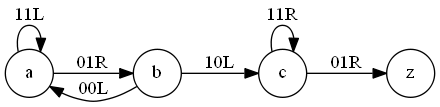} & \includegraphics[width=0.5\textwidth]{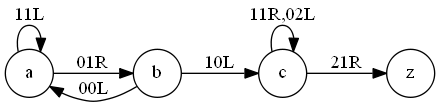} \\
\end{tabular}
\caption{Machine 3}
\label{fig:machine3}
\end{figure}

Unfortunately we have been unable to prove this result, despite it seeming to be obviously true. The difficulty for the proof is to find a similar transformation to the one in the proof of Proposition~\ref{prop:states}. In particular, it seems problematic to come up with a transformation that will take a terminating computation in an $m$-symbol machine, and transform it into a terminating computation in an $(m+1)$-symbol machine. To see the difficulty, consider the machine in the left hand side of Figure~\ref{fig:machine3}.
This has activity 7 and productivity 2. 
It is not hard to tweak this machine by changing the halting transition from $(c,0,1,r,z)$ to $(c,0,2,l,c)$, and adding a new halting transition $(c,2,1,r,z)$. This machine is given in the right hand side of Figure~\ref{fig:machine3}. This makes the execution as below.

\begin{center}
$\{a\}0 \Rightarrow 1\{b\}0 \Rightarrow \{a\}1 \Rightarrow \{a\}01 \Rightarrow 1\{b\}1 \Rightarrow \{c\}1 \Rightarrow 1\{c\}0 \Rightarrow \{c\}12 \Rightarrow 1\{c\}2 \Rightarrow 11\{z\}$
\end{center}

This has activity 9, which is an increase on the previous 7, but still has productivity only 2. 
We may of course consider other changes to the machine, but it only seems safe to change the halting transition, and to add transitions of the form $(\_,2,\_,\_,\_)$, as changing any of the other 5 transitions means that we will not be able to guarantee that the execution of this machine still terminates. Finding some appropriate transformation and hence providing a proof of Conjecture~\ref{conj:symbols} remains an item of future work.

As mentioned above, an even stronger result is desirable here, i.e.\ that we can guarantee that all generated machines are exhaustive. To do so would require a result similar to Proposition~\ref{prop:states}, but showing the strict monotonicity of the busy beaver function in terms of the number of transitions in the machine. As Chaitin has argued \cite{Chaitin87}, it may be more natural to stratify the busy beaver values by classifying them in terms of the number of transitions in the machine, rather than the number of states or symbols used in its definition. Whilst this seems intuitively appealing, the generation of exhaustive machines can only be guaranteed by proving this form of strict monotonicity, which seems significantly more difficult to show than the simple proof given above. Finding and proving such a result will presumably require a much deeper understanding of the nature of terminating busy beaver machines than we have at present. 

It should also be noted that we cannot guarantee that a generated machine will be 1-halting, as we cannot guarantee that a partially generated machine will always terminate, and so executing the machine may never result in an opportunity to add a halting transition. This means that we sometimes have to settle for a machine which is 0-halting, due to the way in which machines are generated by executing partially defined machines. Further discussion on this point is deferred until Sections~\ref{sec:generation} and \ref{sec:results}.

\section{Generation process}\label{sec:generation}

We are now in a position to define the process for generating machines with $n$ states and $m$ symbols where $n,m \geq 2$. This will follow the same general strategy as the \textit{tnf} process described by Lin and Rado \cite{LR65}, but refined in the light of the above results.

We refer to the states in the transitions defined in $M$ as \textit{states}$(M)$, and to the symbols in the transitions defined in $M$ as \textit{symbols}$(M)$. 
For convenience, we will refer to the states as $\{a_1, a_2, \ldots a_n\}$ where $a_1 = a$ and $a_2 = b$. 
We denote by $state\_choice(n, \{a_1, \ldots a_k\})$ the set $\{a_1, \ldots a_k, a_{k+1}\}$ if $k < n$ and $\{a_1, \ldots a_n\}$ otherwise. 
Similarly we denote by $symbol\_choice(m, \{0,1,\ldots l\})$ the set $\{0,1,\ldots l, l+1\}$ if $l < m-1$ and $\{0,1,\ldots m-1\}$ otherwise. 
For example, $state\_choice(4, \{a_1, a_2\}) = \{a_1,a_2,a_3\}$, which will be used to ensure that the next state chosen after $a_1$ and $a_2$ is $a_3$.
Similarly $symbol\_choice(4, \{0,1\}) =\{0,1,2\}$, which will be used to ensure that the next symbol chosen after 1 is 2.  

Note that the procedure $generate(n,m)$ below is non-deterministic; to find all appropriate machines, we need to exhaustively search all ways of outputting a machine from it. 

The procedure $generate(n,m)$ is defined as below. \\

$generate(n,m)$ 
\begin{enumerate}
\item Initialise the machine $M$ to the tuple $(a,0,1,r,b)$.
\item Choose the $b,0$ transition satisfying one of the conditions below and add this transition to $M$.   
		\begin{itemize}
		\item $(b,0,O,l,NS)$ where $NS \in \{a,b\}$, $O \in symbol\_choice(m, \{0,1\})$ 
		\item	if $n \geq 3$, $(b,0,O,D,c)$ where $D \in \{l,r\}$, $O \in symbol\_choice(m, \{0,1\})$  
    \end{itemize}
\item Execute $M$ on the blank input until either 		
			\begin{itemize}
				\item $M$ is known to be irrelevant, or the bound on the number of execution steps is exceeded. Output $M$ and halt. 
				\item an undefined combination of state $S$ and input $I$ is found. 
			\end{itemize}
\item Choose a new transition $(S,I,O,D,NS)$ for $M$ as follows. 
\begin{itemize}
	\item If $M$ is $n$-state full and $m$-symbol full, and $M \cup \{(S,I,1,r,z)\}$ is not 0-dextrous \\
	      $~~~$ add $(S,I,1,r,z)$ to $M$, output $M$ and halt
	\item If $M \cup \{(S,I,O,D,NS)\}$ is not 0-dextrous \\
	      $~~~$ add $(S,I,O,D,NS)$ to $M$ \\
				$~~~~~$ where $NS \in state\_choice(n,states(M))$, $O \in symbol\_choice(m,symbols(M))$, $D \in \{l,r\}$
\end{itemize}
\item If $|M| = n \times m - 1$  add $(S,I,1,r,z)$ to $M$ for the appropriate $S$ and $I$, output $M$ and halt. 
\item Go to step 3.
\end{enumerate}

Note that we halt the process whenever a halting transition is added to $M$. 

Steps 1 and 2 are derived directly from Proposition~\ref{prop:relevant}. 

Step 3 us where the machine is executed until either the machine is known to be irrelevant (such as halting with activity $\leq n$, or violating the blank tape condition), exceeds a given bound on computation length, or finds a place in the machine where a new transition is needed. Clearly we need to store the machines whose computation length which exceeds the given bound, but strictly speaking, we could insist that machines that are known to be irrelevant are not stored at all. Whilst this would reduce the number of machines considered, we have chosen to retain such machines in order to simplify the definition of the generation process. The only exception to this rule is that we choose not to store $0$-dextrous machines, due to the fact that we can easily check whether a machine is $0$-dextrous or not from its definition alone (i.e.\ without executing the machine). However this is very much a matter of taste rather than anything of great significance. 

Step 4 is based on Proposition~\ref{prop:states} and Conjecture~\ref{conj:symbols}. This means that a halting transition is only considered if the partial machine generated already contains $n$ states and $m$ symbols, i.e.\ there are no unused states or symbols. Otherwise, Proposition~\ref{prop:states} and Conjecture~\ref{conj:symbols} indicate that generate a machine of larger activity. The second bullet point in Step 4 ensures that states and symbols are added in the appropriate order, so that if for example the partial machine contains states $\{a,b,c\}$ and symbols $\{0,1\}$, then the next state (if any) to be added is $d$, and the next symbol (if any) is $2$. 

Step 5 ensures that if we get to a point where there is only one possible transition to add, then the halting transition is added, and as the machine is fully defined, there is no need for any further execution. For example, consider generating a 5-state 2-symbol machine, in which there are 8 transitions, leaving the only two unspecified transitions as those for $e,0$ and $d,1$. If in Step 3 find that we need a transition where $S = e$ and $I = 0$, then there are two possibilities. One is to add the halting transition as the one for $e,0$ (Step 4, first bullet point). This is because we already have all states $\{a,b,c,d,e\}$ occurring in transitions in the machine, as well as the symbols $\{0,1\}$. Another possibility is to add a non-halting transition for $e,0$ (Step 4, second bullet point), at which point we will have 9 non-halting transitions in the machine. As there is only one remaining unspecified transition (the one for $d,1$), this must be the halting transition, and so we add the transitions $(d,1,1,r,z)$ and halt. 

Note that in Step 4 we include a test to ensure that the machine generated is not 0-dextrous. By Lemma~\ref{lemma:0d} we know that such machines are irrelevant, and so there is no need to generate such machines. Note also that we cannot guarantee that any machine generated by this process is exhaustive; as noted above, the best we can do seems to be a guarantee that it is $n$-state full and $m$-symbol full for the appropriate $n$ and $m$. 

A further issue arises from Step 3, which is that it is possible that the machine generated so far may not terminate. In order to deal with this issue it seems sensible to incorporate some kinds of non-termination check into the execution process. This might include checking whether the current configuration has occurred earlier, based on the history of the execution trace, or possibly more sophisticated techniques \cite{harland07}. At the very least, it would seem prudent to include an upper bound on the number of execution steps permitted. A further check is suggested by Lemma~\ref{lemma:a03}, which is that we should ignore any machine in which a blank tape occurs other than in the initial configuration. As in the Lemma, a terminating machine in which a blank tape occurs for a state other than $a$ or $z$ will have an equivalent machine without this property. A machine for which this occurs in state $z$ has productivity 0 and is hence irrelevant. A machine for which this occurs in state $a$ other than the initial configuration has activity $\infty$ and is hence irrelevant, as is any other kind of non-terminating machine. Hence if we find that the configuration is blank at any point other than the initial configuration, we should cease execution and note the machine generated as irrelevant. 

This means that there will be some machines generated whose status is known immediately. These include any non-terminating machines detected in Step 3, as well as machines generated by the first bullet point in Step 4, which are known to terminate. So there will be some machines that are already classified as they are generated, including some (the ones found in Step 3 to be non-terminating) which are 0-halting, whilst many of the other machines generated will be unclassified (those whose execution exceeds the bound in Step 3, and those found in Step 5). This somewhat messy arrangement seems to be an unavoidable consequence of the use of the \textit{tnf} technique. In principle, we could add halting transitions to the 0-halting machines so that the set of machines would appear more uniform. However, this seems to be needlessly complicated, and is not difficult to perform if it is required later for some reason. 

A further issue that arises from 0-halting machines is that it is possible that some of these machines may need to be revisited at a later point. As noted above, some 0-halting machines are generated due to the execution bound being exceeded during the generation process. It is possible that later analyses will show that these machines do not terminate. However, if that subsequent analysis cannot do this, then we need to consider the possibility that these machines may need to be defined in more detail, as it may be that these machines can be extended to machines which will terminate on the blank input.  For example, consider the 4-state 2-symbol machine in Figure~\ref{fig:0halting}. This machine exceeded the bound, and so did not have a halting transition added.

\begin{figure}
\centering
\includegraphics[width=0.7\textwidth]{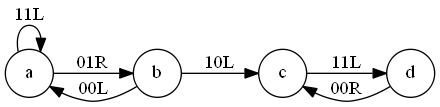}
\caption{0-halting machine}
\label{fig:0halting}
\end{figure}

After 5 steps of computation, this machine is in the configuration  $0\{c\}1$, and it is not hard to see that this machine will oscillate between the configuration $0\{c\}1$ and the configuration $\{d\}01$, and so any extension of this machine will have the same non-terminating behaviour. However, it is not clear that this property will always hold, and so we may need to reconsider such 0-halting machines at a later point. An example of this behaviour is given in the next section.

\section{Implementation and Results}\label{sec:results}

We have implemented the above procedure in SWI-Prolog (version 7.2.3, multi-threaded, 64-bits) \cite{swi}. This is part of a suite of around 5,000 lines of code, with the part dealing specifically with the generation process (when separated from the execution of machines) being only around 200 lines of code. We have used this procedure to generate machines of dimension 4, 6, 8, 9 and 10 (Table~\ref{table:machines}). As in \cite{Harland16}, we use the term \textbf{Blue Bilby} for machines of dimension up to 6, \textbf{Ebony Elephant} for those of dimension 8, \textbf{White Whale} for those of dimension 9 or 10, and \textbf{Demon Duck of Doom} for those of dimension 12. All of our results were obtained on a PC running Windows 7 with an Intel i7 3.6 GHz processor, 8 GB of RAM and a disk capacity of 500 GB. 
All of the code and data referred to here are available at the author's website at \url{www.cs.rmit.edu.au/~jah/busybeaver}.

For comparison, we have also implemented two further processes for generating machines, which will only be possible for the smaller classes. 
This will provide a means of analysing the effectiveness of the tree normal form procedure and a basis for debugging if need be. 
The first of these, which we refer to as \textit{free} generation, generates 1-halting machines by using the same set of $a,0$ and $b,0$ transitions as in the tree normal form process, but otherwise the transitions are generated arbitrarily. The second, which we refer to as \textit{all} generation, generates 1-halting machines arbitrarily, without any restrictions on either the $a,0$ or $b,0$ transitions. In both of these cases, the generation of a 1-halting $n$-state $m$-symbol proceeds until the machine contains $(n \times m) - 1$ transitions, at which point the halting transition is added and the machine is complete. Note that the \textit{free} machines are precisely the \textit{all} machines in which the first transition is $(a,0,1,r,b)$ and the $b,0$ transition obeys the same restrictions as in Step 2 of procedure $generate(n,m)$. 
Both of these methods, as shown above, will generate redundant machines, and we cannot expect these to be as effective as the tree normal form process. However, doing so will hopefully give us some guidance in the cases when we only have available the \textit{tnf} machines.

For example, Table~\ref{table:machines} shows that there are 2,148,483,648 machines for the $4 \times 2$ \textit{all} case, compared to 50,311,648 for the $4 \times 2$ \textit{free} case and 511,145 from the tree normal form process. This shows that tree normal form generation in this case reduces the number of machines to be considered by a factor of over 4,000 compared to \textit{all} generation, and by a factor of just under 100 over \textit{free} generation. For the $2 \times 4$ case, there are 37,748,736 and 342,516 machines in the \textit{free} and \textit{tnf} cases respectively, which means the tnf process reduces the number of machines by a factor of over 6,000 compared to \textit{all} generation, and by a factor of a little over 100 over \textit{free} generation. 

Note also that whilst we have not explicitly generated the $2 \times 4$ machines in the \textit{all} case, we know that there will be exactly the same number of these machines as for the $4 \times 2$ \textit{all} case. We can also generate these from the $4 \times 2$ machines if need be. As all possible $4 \times 2$ machines have been generated, this will include all possible instances of transitions $(S,I,O,D,NS)$, where $S, NS \in \{a,b,c,d\}$ and $I,O \in \{0,1\}$. By swapping states and symbols, this same set of machines can be easily transformed into a set of transitions that includes all possible instances of $(I,S,NS,D,O)$, and by renaming $a,b,c$ and $d$ to $0,1,2$ and $3$ respectively and $0$ and $1$ to $a$ and $b$ respectively, we can generate the corresponding $2 \times 4$ machines. Whilst it seems plausible that this will be faster than generating all such machines from scratch, we have not confirmed this. 

\begin{table}
\centering
\begin{tabular}{|l|c|r|r|r|r|r|r|}\hline
\textbf{Class}          & \textbf{Size} & \textbf{Tnf}  & \textbf{Time (s)} & \textbf{Free} & \textbf{Time (s)}  & \textbf{All}   & \textbf{Time (s)} \\ \hline
\textit{Blue Bilby}     & $2 \times 2$  & 36            & 0.04              & 64            & 0.01               & 2,048          & 0.08              \\
                        & $3 \times 2$  & 3,508         & 0.88              & 55,296        & 2.75               & 1,492,992      & 74.31             \\
                        & $2 \times 3$  & 2,764         & 0.79              & 41,472        & 2.04               & 1,492,992      & 72.90             \\ \hline
\textit{Ebony Elephant} & $4 \times 2$  & 511,145       & 196.11            & 50,331,648    & 3,039.24           & 2,148,483,648  & $\approx$ 170,000 \\
                        & $2 \times 4$  & 342,516       & 145.61            & 37,748,736    & 2,271.52           & 2,148,483,648  &  --               \\ \hline
\textit{White Whale}    & $3 \times 3$  &  26,813,197   & 10,784.62         &  --           &  --                & --             &  --               \\
                        & $5 \times 2$  & 102,550,546   & 78,490.98         &  --           &  --                & --             &  --               \\
							          & $2 \times 5$  &  75,402,497   & 48,399.56         &  --           &  --                & --             &  --               \\ \hline
\textit{Demon Duck}     & $6 \times 2$  & $ \geq 1,540,000,000$   & $\geq$ 1,200,000  &         &                    &                &               \\ \hline
\end{tabular}
\caption{Number of machines}
\label{table:machines}
\end{table}

\begin{table}
\centering
\begin{tabular}{|c|c|c|c|c|c|c|c|c|c|}\hline
                 & \textbf{2 $\times$ 2} & \textbf{2 $\times$ 3} & \textbf{2 $\times$ 4} & \textbf{2 $\times$ 5}   & \textbf{3 $\times$ 3} & \textbf{3 $\times$ 2} & \textbf{4 $\times$ 2} & \textbf{5 $\times$ 2} & \textbf{6 $\times$ 2}\\ \hline
\textbf{Total}   & 36             & 2764           & 342,516        & 75,402,497       & 26,813,197   & 3508           & 511,145        & 102,550,546    & \textbf{??} \\  \hline
$b,0,0,l,a$      & 25.0\%         &  6.0\%         &  3.2\%         &  2.3\%           &  3.2\%       &  6.5\%         &  3.6\%         &  2.5\%         &        503,314,910 \\		         
$b,0,1,l,a$      & 25.0\%         & 17.2\%         &  9.3\%         &  6.1\%           & 10.5\%       & 18.7\%         & 12.4\%         &  8.7\%         & $\geq$ 1,036,685,090 \\
$b,0,2,l,a$      &                & 35.2\%         & 50.1\%         & 57.4\%           & 15.7\%       &                & 		            &                & \\
$b,0,0,l,b$      & 25.0\%         &  6.5\%         &  4.0\%         &  2.7\%           &  3.0\%       &  6.5\%         &  3.1\%         &  2.1\%         & \\
$b,0,1,l,b$      & 25.0\%         & 16.4\%         &  9.0\%         &  6.0\%           &  7.0\%       & 13.3\%         &  7.2\%         &  4.9\%         & \\
$b,0,2,l,b$      &                & 18.7\%         & 24.4\%         & 25.5\%           &  7.6\%       &                &                &                & \\
$b,0,0,l,c$      &                &                &                &                  &  5.3\%       &  8.9\%         & 10.5\%         & 11.0\%         & \\
$b,0,0,r,c$      &                &                &                &                  &  3.7\%       &  7.4\%         & 13.5\%         & 16.6\%         & \\
$b,0,1,l,c$      &                &                &                &                  & 12.0\%       & 22.9\%         & 28.0\%         & 29.0\%         & \\
$b,0,1,r,c$      &                &                &                &                  &  7.1\%       & 15.9\%         & 21.7\%         & 25.2\%         & \\ 
$b,0,2,l,c$      &                &                &                &                  & 14.3\%       &                & 		            &                & \\
$b,0,2,r,c$      &                &                &                &                  & 10.6\%       &                & 		            &                & \\ \hline
\end{tabular}
\caption{Number of machines per $b,0$ transition}
\label{table:machines3}
\end{table}

Apart from the $4 \times 2$ \textit{all} and $2 \times 4$ \textit{all} cases, for all classes up to and including \textbf{White Whale} the number of machines in Table~\ref{table:machines} are not overwhelmingly large for a typical modern personal computer. In particular, the tree normal form cases can be stored with relative ease. In our implementation we have chosen to store these machines in plain text files, with 1,000,000 machines per file. This is a simple and convenient means of storage, but not a particularly efficient one. Nevertheless, with judicious use of modern compression tools such as \textit{7-Zip} \cite{7zip}, this is adequate for our purposes. The figure of 1,000,000 seems a reasonable one, but is more or less arbitrary, although a conveniently round figure like this makes it simpler to count the number of machines generated. It may be appropriate at some point to store these machines in a database accessible via the Web, but making these available via compressed text files is presumably adequate for experimental purposes.  

It is hoped that the analysis of the machines up to and including the \textbf{White Whale} will provide some insights that will lead to further reductions that can be made before tackling the \textbf{Demon Duck of Doom}, for which the numbers seem prohibitive at present. In particular, it is hoped that it will be possible to reduce the number of machines that need to be generated by analysing the smaller classes and identifying stronger criteria for relevance that can be translated into significant reductions in the search space, such as requiring a certain sequence of transitions to be present in order to generate large productivities.  

It also seems highly likely that the \textbf{Demon Duck of Doom} is beyond the capabilities of a typical desktop machine, unlike the smaller cases, and hence will require cloud computing methods, both for generating and storing machines, and for their analysis. At the very least, the \textbf{Demon Duck of Doom} will require a significantly greater level of storage. Not only are there four classes of machines to consider ($6 \times 2, 4 \times 3, 3 \times 4$, and $2 \times 6$), but the number of machines in any of these classes is likely to be much greater than we can reasonably expect to store on commodity hardware. We have generated a fraction of the $6 \times 2$ machines, in order to get some indication of the likely number of machines in this class. As shown in the table in Table~\ref{table:machines3}, there are 503,314,910 machines for the first $b,0$ transition (i.e.\ 503,314,910 6-state 2-symbol machines are generated by the tree normal form process with the transition $(b,0,0,l,a)$). This means that there are around 5 times as many $6 \times 2$ machines for the first $b,0$ transition alone as there are for the entire class of $5 \times 2$ machines. In order to get a more precise estimate of the likely number of $6 \times 2$ machines, we have analysed the numbers of machines for each of the possible $b,0$ transitions (Table~\ref{table:machines3}). Given that the transition $(b,0,0,l,a)$ accounted for 3.6\% and 2.6\% of the $4 \times 2$ and $5 \times 2$ machines respectively, it seems likely that the 503,314,910 $6 \times 2$ machines represent at most 1.6\% of the total. This means that the total number of $6 \times 2$ machines is at least 32,000,000,000, and possibly higher. Using our current (rather wasteful) approach of around 180 bytes per $6 \times 2$ machine, this means that it will take at least $32 \times 10^9 \times 180 = 5.24$ TB to store the unclassified $6 \times 2$ machines alone. We can of course use compression tools to reduce this size once all the machines are generated. This suggests that finding a more efficient storage mechanism (possibly via bit-mapping methods) will be an important consideration. It also seems that investigating machines up to and including the \textbf{White Whale} is a natural ``breakpoint'' in the analysis, as it seems necessary to learn as much as possible about the distribution of machines and their classifications before tackling the \textbf{Demon Duck of Doom}. 

We have also included the time taken to generate each class of machines. These range from a few seconds or less to around 2 days for classes up to and including the \textbf{White Whale} (and significantly longer for the fraction of the $6 \times 2$ machines generated so far). It is quite possible that these times can be substantially improved, but our point in recording them here is to show that the \textbf{White Whale} and all smaller classes can be generated in reasonable amounts of time without the need for any special processing or storage arrangements.

As can be seen from Table~\ref{table:machines}, there are generally less $2 \times n$ machines than there are $n \times 2$ machines. This is due to the $b,0$ transition, as if there are only two possible states, there are only 6 possibilities for this transition. However, for the $n \times 2$ case, there are 8, due to the possibility of a third state (which, as discussed above, can be assumed to be $c$). We note that from the same table the ratio of the number of machines in the $2 \times n$ case to the number of machines in the $n \times 2$ case is 78\%, 66\% and 73\% respectively for $n = 3,4,5$. We presume a similar property will hold for $n = 6$, meaning that there will be more $6 \times 2$ machines than $2 \times 6$ ones, i.e.\ we expect that if the above estimation of the number of $6 \times 2$ machines is correct, then there are ``only'' around 24,000,000,000 $2 \times 6$ machines. 

Having generated the various classes of machine up to and including the \textbf{White Whale}, it seems natural to ask where the known dreadful dragons \cite{Harland16} are to be found in these classes. Of the 100 machines evaluated in \cite{Harland16}, 50 of these are of dimension 10 or less, and hence will be located somewhere in the machines generated. We have located these 50 machines in our list of machines, as noted in Table~\ref{table:dragons}. The numbering is the same as used in \cite{Harland16}. 

\noindent
The \textbf{No.} entry is the given number of the machine from \cite{Harland16}. \\
The \textbf{Dim.} entry is the dimension of the machine. \\
The \textbf{Identifier} entry is our identifier for the machine, based on the sequence in which the machines were generated. \\
The \textbf{Productivity} entry is the productivity of the machine. \\
The \textbf{Incomplete} entry indicates whether or not the machine was completely defined in our search. \\
The \textbf{Modified} entry indicates whether or not it was necessary to modify the published definition of the machine to fit our restrictions. \\
The final two columns contain the same information as the previous \textbf{No.} and \textbf{Identifier} entries, but are sorted by \textbf{Identifier} rather than \textbf{No.} within each class of machines. These are intended to show the clustering, if any, of dragons around particular places in the search. 

\begin{table}
\centering
\begin{tabular}{llrrcc|lr}
\textbf{No.} & \textbf{Dim.} & \textbf{Identifier} & \textbf{Productivity} &  \textbf{Incomplete} & \textbf{Modified} & \textbf{No.} & \textbf{Identifier} \\
 1 & 2x4 &    111,941  & 84             & & Yes & 1 &  111,941  \\
 2 & 2x4 &    112,118  & 90             & &     & 2 &  112,118 \\
 3 & 2x4 &    112,118  & 90             & & Yes & 3 &  112,118  \\
 4 & 2x4 &    229,000  & 2,050          & & Yes & 4 &  229,000   \\ \hline

 5 & 3x3 &  1,797,985  & 31             & & & 17 &     152,890  \\
 6 & 3x3 & 21,115,998  & 5,600          & & & 10 & 	 1,576,778  \\
 7 & 3x3 & 16,422,605  & 13,949         & & & 5	 & 	 1,797,985  \\
 8 & 3x3 &  2,649,261  & 2,050          & & Yes & 8	 & 	 2,649,261 \\
 9 & 3x3 & 16,815,108  & 36,089         & & & 16 & 	 7,787,080  \\
10 & 3x3 &  1,576,778  & 32,213         & & & 13 & 	 8,733,341 \\
11 & 3x3 & 15,148,462  & 43,925         & & Yes & 11 & 	15,148,462 \\
12 & 3x3 & 17,780,452  & 107,900        & & & 7	 & 	16,422,605 \\
13 & 3x3 &  8,733,341  & 43,925         & & & 9	 & 	16,815,108  \\
14 & 3x3 & 18,413,439  & 1,525,688      & & & 12 & 	17,780,452  \\
15 & 3x3 & 18,071,120  & 2,950,149      & & Yes & 15 & 	18,071,120 \\
16 & 3x3 &  7,787,080  & 95,524,079     & & & 14 & 	18,413,439  \\
17 & 3x3 &    152,890  & 374,676,383    & & & 6	 & 	21,115,998 \\ \hline 

18 & 5x2 & 99,152,813  & 4,098          & & & 24 & 	   397,553 \\
19 & 5x2 & 68,312,662  & 4,098          & & & 25 & 	51,991,303 \\
20 & 5x2 & 58,627,384  & 4,097          & & & 23 & 	58,580,865  \\
21 & 5x2 & 58,620,022  & 4,097          & & & 22 & 	58,580,871  \\
22 & 5x2 & 58,580,871  & 4,096          & & & 21 &	58,620,022 \\
23 & 5x2 & 58,580,865  & 4,096          & & & 20 &	58,627,384  \\
24 & 5x2 &    397,553  & 1,915          & & & 19 &	68,312,662 \\
25 & 5x2 & 51,991,303  & 1,471          & & & 26 & 	78,774,278 \\
26 & 5x2 & 78,774,278  & 501            & & & 18 & 	99,152,813 \\ \hline

27 & 2x5 & 67,639,951  & 90,604         & & & 47  &	 1,249,263 \\
28 & 2x5 & 58,764,276  & 64,665         & & & 49  & 	 1,267,093 \\
29 & 2x5 &  1,724,449  & 97,104         & & & 50  & 	 1,267,591 \\
30 & 2x5 & 48,504,073  & 458,357        & & & 48  & 	 1,267,697 \\
31 & 2x5 & 59,334,694  & 668,420        & & & 29  & 	 1,724,449 \\
32 & 2x5 & 44,049,832  & 1,957,771      & & & 41  & 	13,130,674 \\	 
33 & 2x5 & 22,975,390  & 1,137,477      & & & 40  & 	13,143,213 \\	 
34 & 2x5 & 29,026,306  & 2,576,467      & & Yes & 42  & 	15,559,001 \\	 
35 & 2x5 & 43,147,682  & 4,848,239      & & Yes & 45  & 	15,973,730  \\	 
36 & 2x5 & 60,397,442  & 143            & & & 38  & 	20,503,550 \\
37 & 2x5 & 47,902,356  & 4,099          & & Yes & 39  & 	20,685,942 \\
38 & 2x5 & 20,503,550  & 3,685          & & Yes & 33  & 	22,975,390 \\
39 & 2x5 & 20,685,942  & 11,120         & Yes & Yes & 34  & 	29,026,306 \\
40 & 2x5 & 13,143,213  & 36,543,045     & & & 46  & 	31,987,521 \\
41 & 2x5 & 13,130,674  & 114,668,733    & & & 35  & 	43,147,682 \\
42 & 2x5 & 15,559,001  & 398,005,342    & & & 32  & 	44,049,832  \\
43 & 2x5 & 72,578,263  & 620,906,587    & & & 37  & 	47,902,356 \\
44 & 2x5 & 49,459,622  & (10 digits)    & & Yes  & 30  & 	48,504,073 \\
45 & 2x5 & 15,973,730  & (10 digits)    & & & 44  & 	49,459,622  \\
46 & 2x5 & 31,987,521  & (12 digits)    & & & 28  & 	58,764,276 \\
47 & 2x5 &  1,249,263  & (31 digits)    & & & 31  & 	59,334,694 \\
48 & 2x5 &  1,267,697  & (106 digits)   & & & 36  & 	60,397,442 \\
49 & 2x5 &  1,267,093  & (106 digits)   & & & 27 & 	67,639,951 \\
50 & 2x5 &  1,267,591  & (353 digits)   & & & 43  & 	72,578,263 \\
\end{tabular}
\caption{Locations of known dreadful dragons}
\label{table:dragons}
\end{table}

\begin{figure}
\centering
\begin{tabular}{cc}
\includegraphics[width=0.5\textwidth]{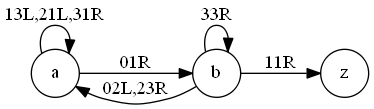} & \includegraphics[width=0.5\textwidth]{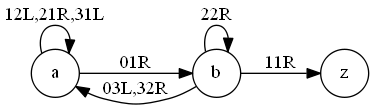} \\
\end{tabular}
\caption{Dragons 2 and 3}
\label{fig:machines2and3}
\end{figure}

\begin{figure}
\centering
\begin{tabular}{cc}
\includegraphics[width=0.5\textwidth]{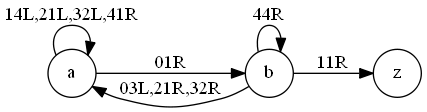} & \includegraphics[width=0.5\textwidth]{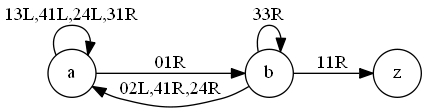} \\
\end{tabular}
\caption{Dragon 39 original and modified}
\label{fig:dragon39}
\end{figure}

\begin{figure}
\centering
\includegraphics[width=0.5\textwidth]{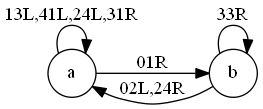} 
\caption{Dragon 39 incomplete definition}
\label{fig:dragon39partial}
\end{figure}

\begin{figure}
\centering
\includegraphics[width=0.5\textwidth]{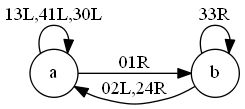} 
\caption{Partially defined machine which does not terminate}
\label{fig:partial4}
\end{figure}

There are a couple of interesting aspects of the generation proces that are highlighted by Table~\ref{table:dragons}. 
Consider first Dragons 2 and 3 (i.e.\ numbers 2 and 3 in Table~\ref{table:dragons}). The definition of these machines is given in Figure~\ref{fig:machines2and3}. These are taken from Marxen's comprehensive list of busy beaver machines \cite{Marxen}, and are labelled as \#b and \#c respectively in the 2x4 class as originally discovered by Pascal Michel. 
It was not difficult to find Dragon 2 in the list of generated $2 \times 4$ machines. However, finding Dragon 3 proved more difficult, and in fact it did not appear to be in the list. On closer inspection, it can be seen that Dragon 3 is in fact productivity and activity equivalent to Dragon 2 (as specified in Lemma~\ref{lemma:symbolname} above), as swapping symbols 2 and 3 in the definition of Dragon 3 gives us Dragon 2. This means that the restrictions we use to generate our machines should also be applied to machines found by other sources in order to do such comparisons. 

This is the reason that the \textbf{Modified} column is included, to show whether or not the definition of the machine as given by Marxen needs to be modified in order to be found in our list of machines. This may involve renaming states and symbols, as well as other transformations given in Section~\ref{sec:normal} which preserve productivity and activity. As can be seen from Table~\ref{table:dragons}, 12 out of the 50 dragons required modification in this way. 

Another interesting aspect of our search is indicated by Dragon 39. The definition given by Marxen, where it is identified as \#c in the $2 \times 5$ class and discovered by T.J. \& S. Ligocki, and our modification of it are given in Figure~\ref{fig:dragon39}. Once this modification was done, we searched for the machine, and found the incomplete definition of it given in Figure~\ref{fig:dragon39partial} generated as number 20,685,942. This has 8 of the same transitions as the modified machine above, but does not include transitions for the $b,4$ and $b,1$ cases. 

The reason that only the incomplete definition was found is due to the bound placed on the amount of execution allowed during the \textit{tnf} search process. The eighth transition added to this machine is $(a,2,4,l,a)$, which is added after 31 execution steps, as this is the first time that the machine is in state $a$ and encounters $2$ as input. This partial definition stays the same until step 1,376, which is the first time that the machine is in state $b$ and encounters $4$ as input. At this point, the $(b,4,1,r,a)$ transition is added, and as there is only one more transition in the machine to be defined, this must be the halting transition, and so the complete machine has been found. However, during the search process, the limit on the number of steps to be executed is set at 200. This means that only the incomplete definition was found (i.e.\ the state of the machine after 200 steps, in which there were 8 transitions known, not 10). 

One reasonable conclusion from this result is that the bound on the number of execution steps needs to be increased, so that this machine (and the other 20 machines with the same first 8 transitions)
are found in the generation process. This will of course increase the time taken to generate the machines, but as this generation is intended to be only done once, this is not the most important consideration. The trickier part is to know what value is appropriate for this bound. For Dragon 39, a bound of say 1,500 would be sufficient. However, it is far from clear whether this bound is appropriate for all machines in the $2 \times 5$ class, or even for a substantial fraction of the other incompletely defined machines generated. It seems fundamental to have some such limit, as there will of course be some incompletely defined machines which do not terminate. For example, the machine in Figure~\ref{fig:partial4} does not terminate, as execution of this machine continually results in configurations of the form $1\{a\}3 2^n 132$ for $n \geq 1$. Accordingly, the \textit{tnf} process will never generate any machine with a strict superset of these seven transitions.

Unless we were to abandon entirely the process of separating the generation of machines from their analysis, it seems that the problem of such incompletely defined machines will remain. We can set the bound on execution during the search process to something that appears reasonable, but there seems to be no way to determine \textit{in advance} a value of this bound which will ensure that the only incompletely defined machines which exceed this bound are those which do not terminate. There must, in fact, be such a bound; in fact, the busy beaver value for the relevant class of machines is an upper bound on this number. However, as the busy beaver value is precisely what we are ultimately trying to determine, this information is not very helpful in setting an appropriate value for this bound. In addition, while the time taken to generate the machines is not of great concern, it cannot be totally ignored, especially given the lack of any guarantee of success. With a bound of 200, the generation of the $2 \times 5$ machines takes around 13 hours; increasing this bound to say 1,500 will clearly significantly increase this time, possibly to around 90 hours, and it is not clear in advance how much difference this extended generation time will make to the number of incompletely defined machines that are generated. 

For now, we note that this issue exists, and that we anticipate the number of such incompletely defined machines is relatively small, based purely on the data that of the 50 known dreadful dragons in the machines up to and including the \textbf{White Whale} (i.e.\ machines 1-50 in \cite{Harland16}), only 1 machine exhibited this problem. Naturally the precise number of such machines (i.e.\ those which are not completely defined by the generation process, and which we cannot show to be non-terminating) will not be known until the class of machines is analysed (i.e. step 3 of the process in \cite{Harland16}). One outcome of such an analysis may well be that there are too many such machines, which means that the bound used in the search needs to be larger, and so the generation must be performed again with this larger bound. If, however, the number of such machines is relatively small (say less than 1 machine in 10,000), then it may be reasonable to keep the same generated class of machines, and perform a more extensive \textit{tnf} search on this relatively small number of machines during the analysis step. 

\section{Conclusions and Further Work}\label{sec:conclusion}

This paper is the second in what may be a lengthy sequence aimed at fleshing out the framework of \cite{Harland16}. In this paper we have concentrated on steps 1 and 2, which involve the generation of the classes of machines which need to be analysed. As we have seen, it is not altogether straightforward to separate the generation of the machines from their analysis, but doing so as best we can seems appropriate. This will not only allow for separate and independent analyses of the machines, but also for the incremental development of analysis techniques. The latter may seem to be a rather trite purpose, but given the apparent complexity of the analysis of some machines, it seems altogether prudent. 

We have given a precise specification of the busy beaver problem, both in terms of the specific variety of Turing machines that we use, as well as the central issues around the productivity and activity of these machines. We have seen how quintuple machines are at least as general as quadruple machines, which means that searching among the quintuple machines will not omit any machines in the quadruple ones, and possibly allow for more cases than the quadruple machines allow. We have given formal results which establish the soundness of our constraints on the machines to be generated, and given a procedure to generate machines satisfying these constraints. We have implemented this procedure and reported our results, which together with all the code used, is available on the author's website. It is hope that this code and data will be useful to other researchers interested in this problem. 

There remain some unresolved issues arising from the results of this paper. One such item of further work is to provide a proof of Conjecture~\ref{conj:symbols}, i.e.\ that the $bb$ and $\mathit{ff}$
 functions are strictly monotonic in the number of symbols used in the machine. This seems obviously true, but as discussed in Section~\ref{sec:generation}, the difficulty is finding a way to extend a terminating $m$-symbol machine into a terminating $(m+1)$-symbol machine of strictly greater activity and productivity. It may be that productivity preserving transformations along the lines of those of Shannon \cite{Shannon56} will be useful here.

Another issue is to investigate whether a similar strict monotonicity result holds for the number of transitions in a machine. Chaitin \cite{Chaitin87} has argued that this is a more natural way in which to organise the busy beaver problem, in that this more directly reflects the complexity of the computation specified by the machine than the number of states or symbols used (or both). 
This suggests that it may be interesting to re-stratify the classes of machines up to and including the \textbf{White Whale}, once the analysis of the machines is done. Not only would this potentially remove some redundancies, it may also introduce some finer-grained analysis, such as comparing the maximum productivity of a 5-state 2-symbol machine with 8 transitions with a similar maximum for machines with 9 or 10 transitions. Such a reworking of the \textbf{White Whale} results may also yield some useful insights for tackling the \textbf{Demon Duck of Doom}. 

A further item of future work is to sharpen our knowledge of the relationship between the quadruple and quintuple variants. As we have seen in Section~\ref{sec:qq}, the quintuple machines are at least as general as the quadruple ones, which is a justification of our choice to concentrate on quintuple machines. It seems that a stronger result is possible, i.e.\ that there are quintuple machines for which there are no productivity-equivalent quadruple machines. It is certainly possible to show that certain ways of simulating quintuple machines on quadruple ones will not work, and even if there is one specific quintuple machine that can be shown inequivalent to any quadruple machine of the same dimension, that would suffice to establish the required separation. However, the difficulty is being able to establish the inequivalence of such a given quintuple machine with \textit{any} quadruple machines of the same dimension. It may be more effective to perform an enumeration of the quadruple machines similar to the one described in this paper in order to understand the relationships between these variants in more detail. 

The next step in the process of \cite{Harland16} is to analyse the generated machines, and in particular to classify them as either terminating or non-terminating. This is likely to take significantly more effort than the results reported in this paper. We also expect that this analysis will suggest further items of theoretical interest, just as this phase of the process has done. 

\section*{Acknowledgements}

The author would like to thank
Jeanette Holkner, 
Barry Jay, 
Pascal Michel,
Sandra Uitenbogerd, 
Michael Winikoff,
and some anonymous referees for valuable feedback on this material. 

\bibliographystyle{plain}
\bibliography{busybeaver}

\end{document}